\def\BibTeX{{\rm B\kern-.05em{\sc i\kern-.025em b}\kern-.08em
    T\kern-.1667em\lower.7ex\hbox{E}\kern-.125emX}}
\newcommand{\R}{\mathbb{R}}
\newcommand{\E}{\mathbb{E}}
\theoremstyle{definition}
\newtheorem{definition}{Definition}
\newtheorem{remark}{Remark}
\theoremstyle{plain}
\newtheorem{theorem}{Theorem}
\newtheorem{lemma}{Lemma}
\newtheorem{corollary}{Corollary}
\begin{document}

\title{Robustness in Wireless Distributed Learning: An Information-Theoretic Analysis}

\author{Yangshuo~He,~Guanding~Yu,~and~Huaiyu~Dai%
\thanks{This work was supported by the National Natural Science Foundation Program of China under Grant 62471434.}
\thanks{Yangshuo He and Guanding Yu are with the College of Information Science and Electronic Engineering, Zhejiang University, 38 Zheda Road, Hangzhou, China, 310027, email: \{sugarhe@zju.edu.cn, yuguanding@zju.edu.cn\} \textit{(Corresponding author: Guanding Yu.)}}
\thanks{Huaiyu Dai is with the Department of Electrical and Computer Engineering, North Carolina State University, Raleigh, NC 27695 USA, e-mail: \{hdai@ncsu.edu\}.}
}

\maketitle

\begin{abstract}
In recent years, the application of artificial intelligence (AI) in wireless communications has demonstrated inherent robustness against wireless channel distortions. Most existing works empirically leverage this robustness to yield considerable performance gains through AI architectural designs. However, there is a lack of direct theoretical analysis of this robustness and its potential to enhance communication efficiency, which restricts the full exploitation of these advantages. In this paper, we adopt an information-theoretic approach to evaluate the robustness in wireless distributed learning by deriving an upper bound on the task performance loss due to imperfect wireless channels. Utilizing this insight, we define task outage probability and characterize the maximum transmission rate under task accuracy guarantees, referred to as the task-aware $\epsilon$-capacity resulting from the robustness. 
To achieve the utility of the theoretical results in practical settings, we present an efficient algorithm for the approximation of the upper bound. Subsequently, we devise a robust training framework that optimizes the trade-off between robustness and task accuracy, enhancing the robustness against channel distortions. Extensive experiments validate the effectiveness of the proposed upper bound and task-aware $\epsilon$-capacity and demonstrate that the proposed robust training framework achieves high robustness, thus ensuring a high transmission rate while maintaining inference performance.
\end{abstract}

\begin{IEEEkeywords}
Information theory, wireless distributed learning, task-aware robustness
\end{IEEEkeywords}

\section{Introduction}

As artificial intelligence (AI) applications are drawing intensive attention in wireless communication systems, wireless distributed learning emerges as a key characteristic of future communications.
In wireless distributed learning, intelligent edge devices in wireless networks collaborate to achieve AI tasks by collecting, transmitting, and processing task-related data. 
The power of AI has led to a paradigm shift from conventional bit-based metrics to task-aware metrics and requires novel approaches to enhance communication efficiency.
In such a scenario, focusing on AI task performance enables the transmitter and receiver to convey information that is both relevant to the task and robust to channel distortion. Hence, robust communication in terms of AI applications can be achieved even under lossy transmission.
It has been shown that exploiting such robustness against wireless channel noise provides a promising method to improve communication efficiency \cite{IB}. Moreover, the robustness in wireless distributed learning reveals the potential to transcend the limitation set by Shannon channel capacity.
This fundamental evolution has stimulated various AI-based joint communication and learning designs that aim at robust and efficient communication in wireless distributed learning \cite{JSCC-text,JSCC-image,DeepSC}. 
However, although many studies on architectural designs demonstrated the effectiveness of implicitly utilizing the inherent robustness of AI to enhance transmission rate, little attention has been paid to the theoretical analysis of such robustness, which hinders further development in this promising field. Therefore, it is essential to comprehend and quantify the inherent robustness in wireless distributed learning, thereby enabling the design of innovative and explainable methods to improve AI task performance.

\subsection{Related Works}
Various joint communication and learning designs emerge in light of the robustness in neural networks (NNs), such as the deep joint source-channel coding architecture for wireless image transmission in \cite{JSCC-image}, where the authors employed the NN-based encoder and decoder to jointly optimize the image reconstruction quality under wireless channel noise. Such a joint design was also applied in natural language processing for text transmission \cite{JSCC-text}, which jointly trained the NNs for both sentence similarity and transmission robustness. Beyond the goal of reconstruction in data-oriented communication, the authors in \cite{IB} introduced the joint communication and learning technique to task-aware communication that focuses on the accuracy of the AI tasks.
However, it is worth noting that most of these works focus on the architecture designs of NNs to achieve performance gains, primarily relying on empirical methods rather than theoretical foundations.

Some previous works attempted to theoretically analyze the performance gain of wireless distributed learning attributed to its robustness. 
By mimicking Shannon's theory, the authors in \cite{towards-semantic-theory} proposed a theoretical framework for semantic-level communication based on logical probability.
Although the analysis is limited to the logical probability system, it sheds light on the theoretical research of integrating AI into wireless distributed learning.
Adopting such an idea, the authors in \cite{semantic-capacity} proposed the semantic channel capacity benefiting from NNs' robustness, which transcends the Shannon capacity. But there was no analytical solution and only numerical approximation by simulations was provided.
To understand and exploit the intrinsic robustness within NNs, the authors in \cite{IB} utilized the information bottleneck (IB) framework to characterize the trade-off between communication overhead and inference performance. The robustness is implicitly measured by the mutual information relevant to the task accuracy.
A rate-distortion framework was proposed in \cite{RDT} for the information source in AI applications. This framework characterizes the trade-off between the bit-level/task-level distortions and code rate.
Furthermore, the authors in \cite{semantic_theory_Deniz} delved into the applications of the language models in wireless communication systems. A distortion-cost region was established as a critical metric for assessing task performance. 
In our previous work \cite{multi-modal-robustness}, we also investigated the inherent robustness in a multi-modal scenario. By employing the robustness verification problem, we evaluate the negative effect of wireless channels on task performance and thus measure the importance of each modality.

\subsection{Motivations and Contributions}
Integrating AI into wireless networks has shown great potential in mitigating channel distortions and enhancing transmission rates. Nevertheless, the absence of interpretable mathematical analysis on robustness may restrict the full utilization of these advancements to improve communication efficiency and task accuracy in wireless distributed learning.
One attempt to evaluate the robustness in wireless distributed learning is to introduce information theory into AI tasks. For example, \cite{IB} utilized mutual information to inherently measure the task-aware robustness and incorporated the IB framework to improve the robustness against channel distortion.
However, the mutual information trade-off in \cite{IB} only provides an implicit evaluation of the robustness, which limits the effectiveness of this approach in improving robustness and leveraging its potential advantages for more aggressive transmission.
Motivated by these limitations, we employ information theory to directly examine the robustness in wireless distributed learning. This paper aims to derive a theoretical measure for the inherent robustness of a learning system against wireless distortions and shed light on its potential benefit of enabling a transmission rate higher than channel capacity. 
To fully exploit the advantages of this robustness, we further propose a robust training framework for wireless distributed learning. Our goal is to enhance the intrinsic robustness and improve communication efficiency, thereby maximizing the benefits of AI in wireless distributed learning.
The main contributions are summarized as follows:

\begin{itemize}
    \item We evaluate the inherent robustness in wireless distributed learning by deriving an upper bound on the performance degradation due to the imperfect wireless channel. Specifically, we leverage the loss function as a metric for task performance, which is generally applicable to various AI tasks such as classification and regression. The AI's capability of mitigating the negative effect of wireless channels could be inferred from its impact on the loss function. By incorporating the Donsker-Varadhan representation of the Kullback-Leibler (KL) divergence, we further establish an upper bound on the performance degradation.
    \item To leverage the robustness in wireless distributed learning for enhancing transmission rates, we propose the task-aware $\epsilon$-capacity, defined under the conditions ensuring successful inference. In particular, we introduce a concept of task outage based on the derived upper bound to characterize instances of failure in wireless distributed inference. Building on this framework, we further define the task-aware $\epsilon$-achievable rate which captures the achievability of a transmission rate in terms of task accuracy. By deriving an upper bound on this rate, termed the task-aware $\epsilon$-capacity, we highlight the communication efficiency gains attributed to the inherent robustness in wireless distributed learning. This theoretical gain substantiates the potential of utilizing robustness to increase transmission rate in a task-aware manner.
    \item Given that the mutual information term in the derived upper bound is often intractable for most AI tasks with high dimensional features, we provide a numerical estimation for it in the context of NNs. Under common assumptions of NNs, we employ the bootstrapping technique to calculate the mutual information and develop an efficient algorithm for its estimation. This approach allows for the practical application of the previously proposed robustness analysis. Specifically, we apply this approach to a wireless distributed learning system, demonstrating that the performance loss due to the imperfect wireless channel is upper bounded by the derived theoretical bound and showing its usage in the task outage probability and task-aware $\epsilon$-capacity in a realistic setting.
    \item We propose a robust training framework that leverages the derived upper bound on the task performance degradation to improve the inherent robustness in wireless distributed learning. To this end, we design an objective function that balances the trade-off between robustness and AI task performance. Our formulation aims to minimize the adverse impact of imperfect wireless channels while maximizing task accuracy. The proposed framework ensures the model achieves high task-aware robustness and is capable of transmitting data at a rate beyond the capacity limit while maintaining the inference performance.
\end{itemize}

\subsection{Notations and Organization of the Paper}
We denote the upper-case letters as random variables, such as $X$, lower-case letters as their realizations, such as $x$, and calligraphic letters as their sets, such as $\mathcal
X$. The expected value and the probability density function (pdf) of the random variable $X$ are denoted by $\E_{p(x)}\left[X\right]$ and $p(x)$, respectively. We use $\Pr\left\{X\right\}$ to denote the probability that $X$ is true. The entropy of $X$ is denoted by $H(X)$ or $H(p(x))$. We use $\mathcal{N}(\mu,\sigma^2)$ for the Gaussian distribution with mean $\mu$ and variance $\sigma^2$ and $\mathcal{N}(x|\mu,\sigma^2)$ for its pdf. The zero vectors and identity matrices are denoted by $\mathbf{0}$ and $\mathbf{I}$. We denote the transpose and absolute value by $(\cdot)^\top$ and $\left|\cdot\right|$.

The rest of this paper is organized as follows: Section \ref{sec:system_model} introduces the system model of wireless distributed learning and formulates the performance loss resulting from the imperfect wireless channel. 
Section \ref{sec:main_results} presents the theoretical analysis of the inherent robustness in wireless distributed learning, including the derived upper bound on the performance loss and the channel capacity benefiting from this robustness. Section \ref{sec:numerical_example} provides a numerical example of the theoretical results within the context of NNs, validating their effectiveness. Section \ref{sec:robust_training_framework} proposes the robust training framework based on the theoretical analysis. Numerical simulations are summarized in Section \ref{sec:simulation}. Finally, Section \ref{sec:conclusion} concludes this paper.

\section{System Model}
\label{sec:system_model}
As shown in Figure \ref{fig:system_model}, we consider a standard setting of distributed learning in the context of wireless communication systems with one base station (BS) and one user equipment (UE). 
Let $\mathcal{Z}=\mathcal{X}\times\mathcal{Y}$ denote the sample space, and each sample $Z\in\mathcal{Z}$ is a pair of input data $X$ and target variable $Y$, i.e., $Z=(X,Y)$. A learning algorithm $p(w|z)$ maps the sample $Z$ to the hypothesis $W\in\mathcal{W}$. 
The learning model $f^{(W)}:\mathcal{X}\to\mathcal{Y}$ outputs the target variable $Y$ given input data $X$\footnote{In neural network settings, the function of the learning model $f^{(W)}$ represents the network architecture, while hypothesis $W$ represents the weights.}. Within the setting of wireless distributed learning, the model is further separated into an encoder $f_e^{(W)}$ and a decoder $f_d^{(W)}$, deployed at the UE and BS respectively. 
The data $X$ is encoded by the encoder into signal $M$ and transmitted to the decoder over a wireless channel with side information $S$. In this study, we consider the channel state as side information, such as the channel state information in a practical wireless communication system. Specifically, we denote the pdf of the side information as $\mu(s)$. Each realization $s\in\mathcal{S}$ can characterize a specific channel condition.
The decoder at the BS leverages the received signal $\hat{M}$ for decoding and outputs the target result $\hat{Y}$. These random variables constitute the following Markov chain:
\begin{equation}
    \label{equ:markov_chain}
    Y\to X\to M\to \hat{M} \to \hat{Y}.
\end{equation}
In a standard distributed learning model without wireless communication, the Markov chain can be simplified as $Y\to X\to M\to \Tilde{Y}$.

\begin{figure}[htbp]
  \centering
  \includegraphics[width=0.9\columnwidth]{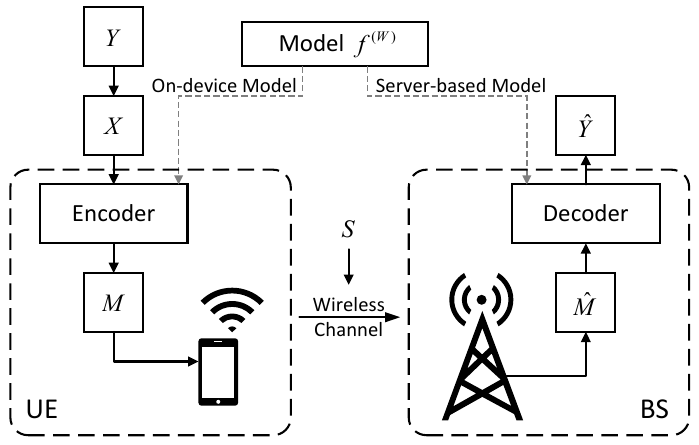}
  \caption{System model.}
  \label{fig:system_model}
\end{figure}

The performance of the learning model $f^{(W)}$ can be generally measured by a loss function $l(f^{(w)}(x),y)$ that calculates the distance between the model output and the target variable.
The loss function is considered as the performance measure for generality, as it would be chosen to match different tasks, including regression, classification, and data recovery in an end-to-end data-oriented communication scenario.
We formally define the loss function as $l:\mathcal{W}\times \mathcal{Z}\to \R^+$. 
First, we consider the performance of standard distributed learning. 
We denote the loss function for a model output $\tilde{y}$ as standard loss $\Tilde{l}\triangleq l(\Tilde{y}, y)$. Given the joint distribution of $p(w,z)$, the \textit{standard risk} is defined as
\begin{equation}
    \label{equ:true_risk}
    L \triangleq \E_{p(w|z)}\E_{p(x,y)}\left[l(W,Z)\right] = \E_{p(w,z)}\left[l(W,Z)\right].
\end{equation}
Note that in the above expression, the expectation is also taken over $p(w|z)$, as we are evaluating the risk on all learned posteriors instead of a specific hypothesis $w$. 

In the context of wireless communication systems, the noisy channel may affect the feature transmission and learning algorithm.  
Specifically, in the transmission process, the received signal $\hat{M}$ may not be identical to the transmitted signal $M$ due to the imperfect wireless channel. Meanwhile, based on the given side information $S$, the BS recovers hidden layer features from the received signal $\hat{M}$. The decoder $f_d^{(W)}$ utilizes these features to compute target results for the training algorithm. This enables the hypothesis $W$ in the decoder to implicitly learn information about the wireless channel in the training process.
Therefore, the wireless channel can affect the model output $\hat{y}$ and influence the loss function, which we denote as wireless loss $\hat{l}\triangleq l(\hat{y},y)$. In this scenario, we consider the joint distribution of the hypothesis and sample space conditioned on the side information $q(w,z|s)$. 
Given this conditional distribution, the \textit{wireless risk} of a specific channel condition $s$ is defined as
\begin{equation}
    \label{equ:wireless_risk}
    L(s) \triangleq \E_{q(w, z | s)}\left[l(W,Z)\right].
\end{equation}
Therefore, the absolute difference between the wireless risk and the standard risk can be considered as the loss difference in the wireless environment. We denote the expectation over all channels as \textit{wireless risk discrepancy}, which is given by
\begin{equation}
    \label{equ:wireless_discrepancy}
    g(\mu,f) \triangleq \E_{\mu(s)}\left[| L(S) - L|\right],
\end{equation}
where $\mu=\mu(s)$ is the probability distribution of the side information and $f=f^{(\cdot)}$ is the function of the learning model.
The definitions are organized into Table \ref{tab:definitions} for better clarification.

\begin{table}[htbp]
  \caption{Definitions}
  \label{tab:definitions}
  \centering
  \begin{tabular}{| m{0.12\columnwidth}<{\centering} | m{0.76\columnwidth}<{\centering} |}
    \hline
    Term & Definition \\ \hline
    $Z$ & Sample drawn from space $\mathcal{Z}$ \\ \hline
    $X$, $Y$ & Input data and target variable \\ \hline
    $W$ & Hypothesis \\ \hline
    $f^{(W)}$ & Learning model given hypothesis $W$ \\ \hline
    $M,\hat{M}$ & Transmitted and received signals \\ \hline
    $S$ & Side information of a wireless channel \\ \hline
    $\mu$ & Probability distribution of the side information \\ \hline
    $\tilde{Y},\hat{Y}$ & Target results of standard and wireless distributed learning\\ \hline
    $p(w,z)$ & Joint distribution of input data and target variable \\ \hline
    $q(w,z|s)$ & Conditional distribution given side information \\ \hline
    $\tilde{l},\hat{l}$ & Standard loss and wireless loss \\ \hline
    $L$ & Standard risk \\ \hline
    $L(s)$ & Wireless risk given side information $s$ \\ \hline
    $g(\mu,f)$ & Wireless risk discrepancy given $\mu$ and $f$ \\ \hline
    $D_{\text{KL}}(\cdot\|\cdot)$ & Kullback-Leibler divergence \\ \hline
    $I(A;B)$ & Mutual information between variables $A$ and $B$ \\ \hline
    $H(A)$ & Entropy of variable $A$ with distribution $p(a)$ \\ \hline
    $CE(\rho,\pi)$ & Cross-entropy between distributions $\pi$ and $\rho$ \\ \hline
    $G$ & Upper bound of the wireless risk discrepancy $g(\mu,f)$ \\ \hline
    $C_\epsilon$ & Task-aware $\epsilon$-capacity \\ \hline
  \end{tabular}
\end{table}

The discrepancy can be viewed as a characterization of the model's robustness to the imperfect wireless channel and its meaning is twofold.
First, via the power of deep learning, we can learn a model minimizing the difference to achieve better robustness against distortion in wireless communication systems. 
Second, a small discrepancy indicates that the model allows for more channel perturbation while maintaining task performance, thus a relatively high transmission rate can be adopted. 

Traditional communication systems generally target error-free transmission or very small bit error rate (BER). 
However, error-free transmission does not necessarily align with the objectives of AI in a wireless scenario.
The main goal of wireless distributed learning is to execute the task with sufficient precision, e.g., as measured by the accuracy in classification and the mean absolute error in regression.
In the presence of wireless channel distortion, a task may still be completed successfully given the robustness of learning models. 
Therefore, we can transition from error-free to task-reliable transmission, exploiting the possible capacity gain in a wireless distributed learning system. 
To meet the task-reliable criterion, the system must ensure that the wireless risk discrepancy is negligible for the task's successful execution despite transmission errors.

\section{Theoretical analysis of the robustness}
\label{sec:main_results}
In this section, we incorporate information theory to evaluate the robustness against channel distortions in wireless distributed learning. By investigating the discrepancy in loss functions, we derive an upper bound indicating the ability to mitigate the impact of channel noise. Subsequently, we analyze the benefit of such robustness in improving communication efficiency. In a task-aware manner, we introduce a novel concept of task outage and thus further propose the task-aware $\epsilon$-capacity benefiting from the robustness in wireless distributed learning. 

\subsection{Upper Bound of Wireless Discrepancy}
\label{sec:upper_bound}
To carry out information-theoretic analysis, we revisit the wireless risk discrepancy $g(\mu,f)$ in \cref{equ:wireless_discrepancy}, which measures the performance deterioration caused by the imperfect wireless channel. It can be viewed as the expected difference between distributions $q(w, z | s)$ and $p(w,z)$.
Therefore, KL divergence $D_{\text{KL}}(\cdot\|\cdot)$ serves as a useful tool to measure the distance between $q$ and $p$.
We leverage the Donsker-Varadhan variational representation of KL divergence in \cite{KL-divergence-variant} as the following lemma:
\begin{lemma}
\label{thm:KL_divergence}
For any two probability measures $\pi$ and $\rho$ defined on a common measurable space $(\Omega,\mathcal{F})$ where $\mathcal{F}$ is a $\sigma$-algebra on the set $\Omega$. Suppose that $\pi$ is absolutely continuous with respect to $\rho$, i.e., for every event $A$, if $\pi(A)=0$ then $\rho(A)=0$. Then
\begin{equation}
    \label{equ:KL_divergence}
    D_{\text{KL}}(\pi\|\rho) = \sup_{F\in\mathcal{F}}\left\{\int_\Omega F d\pi - \log \int_\Omega e^F d\rho\right\},
\end{equation}
where the supremum is taken over all measurable functions $F$, such that $F$ is integrable under $\pi$ and $e^F$ is integrable under $\rho$.
\end{lemma}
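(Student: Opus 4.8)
The plan is to prove the identity by establishing the two inequalities separately: first that every admissible test function $F$ yields a value no larger than $D_{\text{KL}}(\pi\|\rho)$, and then that this bound is attained by a specific choice of $F$. The natural candidate for the optimizer is the log-likelihood ratio $F^\star = \log\frac{d\pi}{d\rho}$, which exists as a measurable function precisely because the absolute-continuity hypothesis $\pi\ll\rho$ guarantees a Radon--Nikodym derivative $\frac{d\pi}{d\rho}$. The whole argument reduces to the non-negativity of relative entropy together with a change of measure.

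For the upper bound, I would introduce the tilted (Gibbs) probability measure $\rho_F$ defined by $d\rho_F = \frac{e^F}{\int_\Omega e^F d\rho}\,d\rho$, which is well defined since $e^F$ is $\rho$-integrable by assumption. The key step is to expand $\frac{d\pi}{d\rho_F} = \frac{d\pi}{d\rho}\cdot\frac{\int_\Omega e^F d\rho}{e^F}$, take logarithms, and integrate against $\pi$, which yields the identity $D_{\text{KL}}(\pi\|\rho_F) = D_{\text{KL}}(\pi\|\rho) - \int_\Omega F\,d\pi + \log\int_\Omega e^F d\rho$. Because relative entropy is always non-negative, $D_{\text{KL}}(\pi\|\rho_F)\ge 0$, and rearranging immediately gives $\int_\Omega F\,d\pi - \log\int_\Omega e^F d\rho \le D_{\text{KL}}(\pi\|\rho)$ for every admissible $F$; hence the supremum does not exceed the divergence.

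For tightness, I would substitute $F^\star=\log\frac{d\pi}{d\rho}$ into the functional. Then $e^{F^\star}=\frac{d\pi}{d\rho}$ integrates to $\pi(\Omega)=1$ under $\rho$, so the correction term $\log\int_\Omega e^{F^\star}d\rho$ vanishes, while $\int_\Omega F^\star\,d\pi = \int_\Omega \log\frac{d\pi}{d\rho}\,d\pi = D_{\text{KL}}(\pi\|\rho)$ by definition. This shows the functional equals $D_{\text{KL}}(\pi\|\rho)$ at $F^\star$, matching the upper bound and establishing equality.

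The main obstacle will be the measure-theoretic bookkeeping around admissibility rather than the algebra itself. When $D_{\text{KL}}(\pi\|\rho)=+\infty$ the optimizer $F^\star$ fails to be $\pi$-integrable, so I must instead show the supremum diverges through a truncation argument, approximating $F^\star$ by the bounded functions $F_n = \min\{F^\star, n\}$ and letting $n\to\infty$ via monotone convergence. I would also verify that $F^\star$ is genuinely admissible in the finite case---that $e^{F^\star}$ is $\rho$-integrable, which holds since it integrates to one, and that $F^\star$ is $\pi$-integrable, which is exactly the finiteness of the divergence---so that the attained value legitimately lies inside the feasible set over which the supremum is taken.
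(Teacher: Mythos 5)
Your proof cannot be compared against an in-paper argument, because the paper does not prove this lemma at all: it is imported as a known result (the Donsker--Varadhan representation, cited from Gray's \emph{Entropy and Information Theory}), and the paper only \emph{applies} it with $\pi=q(w,z|s)$, $\rho=p(w,z)$, $F=\lambda l(w,z)$ to derive Theorem~\ref{thm:upper_bound}. What you have supplied is the standard self-contained proof, and it is sound: the Gibbs-tilting identity $D_{\text{KL}}(\pi\|\rho_F)=D_{\text{KL}}(\pi\|\rho)-\int_\Omega F\,d\pi+\log\int_\Omega e^F d\rho$ together with non-negativity of relative entropy gives the upper bound for every admissible $F$ (note this direction is trivially true when the divergence is infinite), and the candidate $F^\star=\log\frac{d\pi}{d\rho}$ gives the matching lower bound. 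Your upward truncation $F_n=\min\{F^\star,n\}$ for the infinite-divergence case is also the right device, since $\log\int_\Omega e^{F_n}d\rho$ stays bounded while $\int_\Omega F_n\,d\pi\to\infty$.

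Two points of bookkeeping deserve correction. First, attainment: $F^\star=-\infty$ on the set $\{d\pi/d\rho=0\}$, which can have positive $\rho$-measure whenever $\rho\not\ll\pi$; on that event $F^\star$ is not a real-valued measurable function, so it is not admissible and the supremum is in general approached but not attained. The fix is the mirror image of your truncation: take $F_n=\max\{F^\star,-n\}$, for which $\int_\Omega F_n\,d\pi\geq D_{\text{KL}}(\pi\|\rho)$ while $\log\int_\Omega e^{F_n}d\rho\leq\log\left(1+e^{-n}\right)\to 0$, so the supremum still equals the divergence; only your claim that the bound ``is attained'' needs weakening. Second, the lemma as printed states absolute continuity backwards ($\pi(A)=0\Rightarrow\rho(A)=0$ is $\rho\ll\pi$); your proof correctly uses the hypothesis the result actually requires, namely $\pi\ll\rho$, which is what guarantees the Radon--Nikodym derivative $d\pi/d\rho$ exists. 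That discrepancy is a defect of the paper's statement, not of your argument.
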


We further assume that the loss function $l(w,z)$ is $\sigma$-sub-Gaussian\footnote{A random variable $X$ is $\sigma$-sub-Gaussian on $p$ if for all $\lambda\in\R$, $\log\E_p\left[e^{\lambda X}\right]\leq \lambda^2\sigma^2/2$.} on distribution $p(w,z)$, which can be readily fulfilled by using a clipped loss function \cite{sub-Gaussian} \cite{information-theoretic}. 
By letting $\pi=q(w, z | s)$, $\rho=p(w,z)$, and $F = \lambda l(W,Z)$, we can derive the following theorem:

\begin{theorem}
    \label{thm:upper_bound}
    If the loss function $l(w,z)$ is $\sigma$-sub-Gaussian on distribution $p(w,z)$, then the wireless risk discrepancy is upper bounded by
    \begin{equation}
        \label{equ:upper_bound}
        g(\mu,f) \leq \sigma \sqrt{2I(W,Z;S)} \triangleq G.
    \end{equation}
\end{theorem}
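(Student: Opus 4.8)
The plan is to reduce the claimed bound to a per-channel estimate and then average over $\mu$. First I would fix an arbitrary channel realization $s$ and apply Lemma~\ref{thm:KL_divergence} with $\pi = q(w,z\mid s)$, $\rho = p(w,z)$, and the test function $F = \lambda\,l(w,z)$ for a free parameter $\lambda\in\R$. Since the supremum in \cref{equ:KL_divergence} dominates any particular choice of $F$, this yields the variational inequality
\begin{equation}
    \lambda\,\E_{q(w,z\mid s)}[l(W,Z)] - \log\E_{p(w,z)}\!\left[e^{\lambda l(W,Z)}\right] \leq D_{\text{KL}}\!\left(q(w,z\mid s)\,\|\,p(w,z)\right).
\end{equation}
Writing $\log\E_{p}[e^{\lambda l}] = \lambda L + \log\E_{p}[e^{\lambda(l-L)}]$ and invoking the $\sigma$-sub-Gaussian assumption on the centered loss $l-L$, I would bound the cumulant generating function by $\log\E_{p}[e^{\lambda(l-L)}] \leq \lambda^2\sigma^2/2$, so that the left-hand side above is at least $\lambda\bigl(L(s)-L\bigr) - \lambda^2\sigma^2/2$.

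Combining the two estimates gives, for every $\lambda\in\R$,
\begin{equation}
    \lambda\bigl(L(s)-L\bigr) - \frac{\lambda^2\sigma^2}{2} \leq D_{\text{KL}}\!\left(q(w,z\mid s)\,\|\,p(w,z)\right).
\end{equation}
Restricting to $\lambda>0$ and dividing through yields an upper bound on $L(s)-L$; restricting to $\lambda<0$ yields the matching bound on $L-L(s)$. In both cases the right-hand side is convex in the single parameter $\lambda$, and the optimal choice $|\lambda| = \sqrt{2 D_{\text{KL}}(q(\cdot\mid s)\|p)}/\sigma$ collapses the two one-sided estimates into
\begin{equation}
    |L(s)-L| \leq \sigma\sqrt{2\, D_{\text{KL}}\!\left(q(w,z\mid s)\,\|\,p(w,z)\right)}.
\end{equation}

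It then remains to average over the channel distribution. Taking $\E_{\mu(s)}$ on both sides recovers the wireless risk discrepancy $g(\mu,f)$ on the left, while on the right I would invoke Jensen's inequality, using the concavity of $t\mapsto\sqrt{t}$, to pull the expectation inside the square root:
\begin{equation}
    g(\mu,f) \leq \sigma\,\E_{\mu(s)}\!\left[\sqrt{2\, D_{\text{KL}}}\right] \leq \sigma\sqrt{2\,\E_{\mu(s)}\!\left[D_{\text{KL}}\!\left(q(w,z\mid s)\,\|\,p(w,z)\right)\right]}.
\end{equation}
Finally, identifying $p(w,z)$ with the $s$-marginal $\int q(w,z\mid s)\,\mu(s)\,ds$, the averaged KL divergence is exactly the mutual information, $\E_{\mu(s)}[D_{\text{KL}}(q(\cdot\mid s)\|p)] = I(W,Z;S)$, by the standard decomposition of mutual information, which delivers $g(\mu,f)\leq \sigma\sqrt{2I(W,Z;S)}=G$.

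The individual steps are routine once the setup is fixed, and I expect the main obstacle to be conceptual rather than computational: ensuring that the distribution $p(w,z)$ appearing in the standard risk genuinely coincides with the $\mu$-marginal of $q(w,z\mid s)$, so that the averaged KL term is a bona fide mutual information (this same consistency also secures the absolute-continuity hypothesis of Lemma~\ref{thm:KL_divergence}). A secondary point requiring care is the sub-Gaussian bookkeeping: the assumption must be applied to the centered loss $l-L$ and exploited for both signs of $\lambda$, since a careless one-sided argument would control only $L(s)-L$ and not its absolute value.
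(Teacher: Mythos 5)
Your proposal is correct and follows essentially the same route as the paper's proof: Donsker--Varadhan (Lemma~\ref{thm:KL_divergence}) with $\pi=q(w,z|s)$, $\rho=p(w,z)$, $F=\lambda l(w,z)$, a quadratic-in-$\lambda$ inequality resolved optimally in $\lambda$ (equivalent to the paper's non-positive-discriminant argument), then $\E_{\mu}$, Jensen's inequality, and the identification $\E_{\mu}[D_{\text{KL}}(q\|p)]=I(W,Z;S)$. If anything, your bookkeeping is slightly cleaner than the paper's: you apply sub-Gaussianity to the centered loss $l-L$ and treat both signs of $\lambda$ explicitly, and you correctly flag that the identification of the averaged KL with mutual information requires $p(w,z)$ to be the $\mu$-marginal of $q(w,z|s)$ --- a consistency condition the paper uses implicitly.
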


\begin{proof}
    Let a measurable function be $F=\lambda l(W,Z),\ \forall \lambda\in\R$, given the supremum in \cref{equ:KL_divergence}, we have the following inequality:
    \begin{subequations}
        \label{equ:KL_divergence_gap}
        \begin{align}
            D_{\text{KL}}(q\| p) & \geq \E_{q}\left[\lambda l(W,Z)\right] - \log \E_{p}\left[e^{\lambda l(W,Z)}\right] \label{equ:KL_divergence_gap-1} \\
            & \geq \E_{q}\left[\lambda l(W,Z)\right] - \frac{\lambda^2\sigma^2}{2} \label{equ:KL_divergence_gap-2} \\
            & \geq  \lambda\left(\E_{q}\left[ l(W,Z)\right]- \E_{p}\left[ l(W,Z)\right]\right) - \frac{\lambda^2\sigma^2}{2}, \label{equ:KL_divergence_gap-3}
        \end{align}
    \end{subequations}
    where \cref{equ:KL_divergence_gap-1} is from the supremum, \cref{equ:KL_divergence_gap-2} follows from the sub-Gaussian assumption of the loss function $l(w,z)$ on distribution $p(w,z)$, and \cref{equ:KL_divergence_gap-3} is from the non-negativity of $\E_{p}\left[ l(W,Z)\right]$.
    For simplicity, we denote $D_{\text{KL}}(q(w, z | s)\| p(w,z))$ as $D(q\|p)$. 
    Inequality \cref{equ:KL_divergence_gap-3} indicates a non-negative parabola in $\lambda$ as follows:
    \begin{equation}
        \label{equ:parabola}
        \frac{\sigma^2}{2}\lambda^2 - \left(\E_{q}\left[ l(W,Z)\right] - \E_{p}\left[ l(W,Z)\right]\right)\lambda + D(q\|p) \geq 0.
    \end{equation}
    To ensure the non-negativity of the parabolic function, a non-positive discriminant is required
    \begin{equation}
        \label{equ:KL_divergence_discriminant}
        \frac{1}{2 \sigma^2}\left(\E_q[l(W, Z)]-\E_p[l(W, Z)]\right)^2 - D(q\|p) \leq 0.
    \end{equation}
    Taking the expectation of the square root under the side information distribution $\mu$ gives
    \begin{equation}
        \label{equ:jensen_inequality}
        \begin{split}
            \E_{\mu}\left[\left|\E_q[l(W,Z)]-\E_p[l(W,Z)]\right|\right] & \leq \E_{\mu}\left[\sqrt{2 \sigma^2 D(q\|p)}\right] \\
            & \leq \sqrt{2 \sigma^2 \E_{\mu}\left[D(q\|p)\right]},
        \end{split}
    \end{equation}
    where the second inequality follows from Jensen's inequality on the concave square root function. Note that the mutual information is the expectation of the KL divergence $I(W,Z;S) = \E_{\mu(s)}\left[ D(q\|p)\right]$, we can derive $g(\mu,f) \leq \sigma \sqrt{2I(W,Z;S)}$. 
\end{proof}

Theorem \ref{thm:upper_bound} provides an upper bound on the expected discrepancy between the wireless risk and the standard risk. This discrepancy evaluates the deterioration of task performance caused by the wireless channel distortion. A small discrepancy indicates the model is relatively robust to the channel noise. Since the wireless risk discrepancy in \cref{equ:wireless_discrepancy} is intractable during the training and inference processes of wireless distributed learning, it is natural to utilize the derived upper bound $G=\sigma\sqrt{2I(W,Z;S)}$ as an alternative measure of task-aware robustness. Minimizing the upper bound enables the model to transmit data at a high rate beyond capacity while preserving task accuracy. Therefore, it is essential for the wireless distributed learning model to learn an optimal hypothesis that achieves a tight upper bound.
In the following section, we will derive an efficient algorithm to estimate the upper bound in the NN scenario and propose a novel framework for robust inference in wireless distributed learning.

\begin{remark}
    \label{thm:special_case}
    It is worth noting that the wireless risk discrepancy becomes zero upon achieving error-free transmission, indicating exact recovery of the transmitted features, which is the goal of data-oriented communication.
    Additionally, if the mutual information $I(W,Z;S) = 0$, the wireless risk discrepancy also becomes zero. 
    It suggests that $(W,Z)$ are jointly independent of side information $S$. In this scenario, the model generalizes to any wireless channel as analyzed in \cite{PAC-Bayes}, thus accomplishing perfect inference in wireless distributed learning.
    However, absolute independence is virtually unattainable. In practice, the learning algorithm can only find a model where the mutual information approaches zero.
\end{remark}

\begin{remark}
    \label{thm:chain_rule}
    Using the chain rule of mutual information, $I(W,Z;S)$ can be written as
    \begin{equation}
        \label{equ:MI_first}
        I(W,Z;S) = I(Z;S) + I(W;S|Z),
    \end{equation}
    and
    \begin{equation}
        \label{equ:MI_second}
        I(W,Z;S) = I(W;S) + I(Z;S|W).
    \end{equation}
    From \cref{equ:MI_first}, it can be observed that the sample $Z$, which consists of both input data $X$ and target $Y$, is independent of the wireless channel. Hence, the first term $I(Z;S)$ becomes zero, and the mutual information reduces to $I(W,Z;S) = I(W;S|Z)$.
    The first term in \cref{equ:MI_second} represents the information about the channel contained in the learned hypothesis. From the perspective of probably approximately correct (PAC) \cite{PAC-Bayes}, a small $I(W;S)$ mitigates over-fitting on a specific channel condition and thus increases robustness against wireless channels. The second term can be further written as $I(Z;S|W) = I(Y;S|W,X) + I(X;S|W)$, where $I(Y;S|W,X) = \E_{p(x,w,s)}\left[D_{\text{KL}}(p(y|x,w,s)\|p(y|x,w))\right]$. 
    In the classification problem, the cross-entropy between the wireless prediction $p(y|x,w,s)$ and the standard prediction $p(y|x,w)$ becomes $CE\left(p(y|x,w,s), p(y|x,w)\right) = H(p(y|x,w,s)) + D_{\text{KL}}(p(y|x,w,s)\|p(y|x,w))$. Therefore, minimizing the mutual information $I(W,Z;S)$ is equivalent to reducing the distance between $\hat{y}$ and $\Tilde{y}$, thereby mitigating the negative effect of the wireless channel.
\end{remark}

\begin{remark}
    In this part, we examine the achievability of the upper bound presented in Theorem 1. The sub-Gaussian condition in \cref{equ:KL_divergence_gap-2} takes an equal sign when the variable $l$ follows a normal distribution under $p(w,z)$. 
    The equality in \cref{equ:KL_divergence_gap-3} is achieved when the standard risk is zero, i.e., $\E_p[l(W, Z)]=0$. This condition implies that the model achieves optimality in standard distributed learning. For instance, a zero mean squared error in regression and an exact match between output and label distribution.
    However, it represents an idealized scenario that is generally unattainable in practical applications. Therefore, this equality only serves as a conceptual benchmark rather than an achievable setting in the real world.
    It is further discerned from \cref{equ:KL_divergence_discriminant} that when the discriminant is zero, there is a single solution for the parabola, given by $\lambda_0=\sqrt{2D_{\text{KL}}(q\|p)/\sigma^2}$. 
    Finally, Jensen's inequality in \cref{equ:jensen_inequality} becomes trivial when $D(q\|p)$ is constant. In this scenario, the two distributions $q(w,z|s)$ and $p(w,z)$ either are identical or have a fixed relationship, indicating perfect inference as analyzed in Remark \ref{thm:special_case}.
    Therefore, the upper bound of the wireless performance gap in Theorem \ref{thm:upper_bound} is achieved when the model becomes perfect in both standard and wireless distributed learning and the loss function adheres to a normal distribution $l(W,Z)\sim \mathcal{N}(0, \sigma^2)$ under $p(w,z)$. 
    However, due to the intractability of $l(w,z)$, deriving the analytical forms of $q(w,z|s)$ and $p(w,z)$ that satisfy the upper bound equality is not feasible in practical simulations.
    We can only approach the upper bound by optimizing $q(w,z|s)$ in the training process.
\end{remark}

\begin{remark}
    The existing information-theoretic analysis based on IB \cite{IB} utilizes $I(X;T)$ to measure the robustness of learning models in edge inference inherently. Unlike such an analysis of the data flow $X\to T$ \cite{information-flow}, the mutual information $I(W,Z;S)$ in the derived upper bound quantifies the information of the wireless channel contained in the model and directly reveals the robustness against wireless distortion. 
    Moreover, our measure is invariant to the choice of activation functions, since it is not directly influenced by hidden representations like $I(X;T)$.
\end{remark}

\begin{remark}
    In our previous analysis, we assumed that the loss function $l(w,z)$ is $\sigma$-sub-Gaussian, enabling us to leverage the properties of sub-Gaussian distributions in deriving the upper bound. However, this assumption may be too restrictive in certain scenarios where the loss function exhibits heavier tails. To address this limitation, we extend our theoretical results to the sub-Gamma distribution, which is a generalization of the sub-Gaussian distribution. Since both sub-Gaussian and sub-Gamma distributions are commonly used assumptions in deep learning theoretical analysis \cite{sub-gamma}, this extension will further enhance the applicability of the proposed theorem.

    A random variable $X$ is said to be sub-Gamma if for all $0<\lambda < c^{-1}$:
    \begin{equation}
        \log \E\left[e^{\lambda X}\right] \leq \frac{\lambda^2 \sigma^2}{2(1-c\lambda)},
    \end{equation}
    where $\sigma>0$ and $c>0$ are parameters characterizing the variance and tail behavior, respectively. Sub-Gamma distributions generalize sub-Gaussian distributions (where $c=0$) by allowing for heavier tails. Under this assumption, we can also derive an analogous upper bound on the performance degradation caused by imperfect wireless channels. 
    \begin{corollary}
        \label{thm:upper_bound_gamma}
        If the loss function $l(w,z)$ is $(\sigma,c)$-sub-Gamma on distribution $p(w,z)$, then the wireless risk discrepancy is upper bounded by
        \begin{equation}
            \begin{split}
                \Tilde{g}(\mu,f) & \triangleq \E_{\mu(s)}\left[ L(S) - L\right] \\
                & \leq \sigma \sqrt{2I(W,Z;S)} + cI(W,Z;S).
            \end{split}
        \end{equation}
    \end{corollary}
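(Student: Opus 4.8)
The plan is to retrace the proof of Theorem \ref{thm:upper_bound} verbatim up to the cumulant step, then replace the sub-Gaussian bound on $\log\E_p[e^{\lambda l}]$ with the sub-Gamma one. The crucial structural difference is that the sub-Gamma inequality holds only for $\lambda$ in the bounded positive range $(0,c^{-1})$, rather than for all $\lambda\in\R$. This one-sidedness is exactly why the statement bounds $\E_{\mu}[L(S)-L]$ rather than $\E_{\mu}[|L(S)-L|]$: we can only control the upper tail of the loss discrepancy, so the absolute value is dropped.

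Concretely, I would apply the Donsker--Varadhan representation in \cref{equ:KL_divergence} with $\pi=q(w,z\mid s)$, $\rho=p(w,z)$, and $F=\lambda l(w,z)$, now restricting to $0<\lambda<c^{-1}$. Invoking the sub-Gamma assumption to bound $\log\E_p[e^{\lambda l}]\leq \frac{\lambda^2\sigma^2}{2(1-c\lambda)}$, and writing $\Delta(s)\triangleq L(s)-L=\E_q[l(W,Z)]-\E_p[l(W,Z)]$, I obtain
\begin{equation}
    D(q\|p) \geq \lambda\Delta(s) - \frac{\lambda^2\sigma^2}{2(1-c\lambda)}, \qquad 0<\lambda<c^{-1}.
\end{equation}

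The core step is to optimize this over $\lambda$. Unlike the sub-Gaussian case, the right-hand side is no longer a quadratic, so the clean discriminant argument of \cref{equ:KL_divergence_discriminant} is unavailable. Instead I would take the supremum of the right-hand side over $\lambda$, recognizing it as the Cram\'er transform (Legendre dual) $\psi^*(\Delta(s))$ of the sub-Gamma cumulant function $\psi(\lambda)=\frac{\lambda^2\sigma^2}{2(1-c\lambda)}$, whence $\psi^*(\Delta(s))\leq D(q\|p)$. Using the standard inversion identity for sub-Gamma variables, namely that $\psi^*$ is increasing on $\R^+$ with inverse $(\psi^*)^{-1}(y)=\sqrt{2\sigma^2 y}+cy$ (which reduces to the sub-Gaussian inverse $\sqrt{2\sigma^2 y}$ as $c\to 0$), I conclude that for each channel realization
\begin{equation}
    \Delta(s) \leq \sigma\sqrt{2D(q\|p)} + c\,D(q\|p).
\end{equation}

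Finally I would take the expectation over $\mu(s)$. The linear term gives $c\,\E_{\mu}[D(q\|p)]=c\,I(W,Z;S)$ directly, while the square-root term is handled by Jensen's inequality exactly as in \cref{equ:jensen_inequality}, yielding $\sigma\,\E_{\mu}[\sqrt{2D(q\|p)}]\leq\sigma\sqrt{2\E_{\mu}[D(q\|p)]}=\sigma\sqrt{2I(W,Z;S)}$; summing the two terms produces the claimed bound. I expect the main obstacle to be the third paragraph: establishing the closed-form inverse of the sub-Gamma Cram\'er transform, equivalently solving $\min_{\lambda}\frac{D(q\|p)+\psi(\lambda)}{\lambda}$ at the optimizing point $\lambda_0$ and verifying that $\lambda_0$ indeed lies in the admissible interval $(0,c^{-1})$. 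This replaces the elementary discriminant computation of the sub-Gaussian proof and is precisely where the extra additive term $cI(W,Z;S)$ arises.
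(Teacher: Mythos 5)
Your proof is correct, and its overall skeleton coincides with the paper's Appendix~\ref{apx:proof_gamma}: Donsker--Varadhan with $F=\lambda l$ on $0<\lambda<c^{-1}$, a per-realization bound $\Delta(s)\leq \sigma\sqrt{2D}+cD$ where $D\triangleq D_{\text{KL}}(q\|p)$, and then expectation over $\mu$ with Jensen applied to the square-root term only. The one place you genuinely diverge is the middle step. You claim the discriminant argument is unavailable because the right-hand side is no longer quadratic in $\lambda$, and you replace it with the Cram\'er-transform inversion $(\psi^*)^{-1}(y)=\sqrt{2\sigma^2 y}+cy$. The paper instead salvages the discriminant argument by clearing the denominator: multiplying $D\geq \lambda\Delta-\frac{\lambda^2\sigma^2}{2(1-c\lambda)}$ by $(1-c\lambda)>0$ yields the quadratic constraint $\left(c\Delta+\tfrac{\sigma^2}{2}\right)\lambda^2-(\Delta+cD)\lambda+D\geq 0$ on $(0,c^{-1})$, from which it asserts $\left[cD-\Delta\right]^2\leq 2\sigma^2 D$ and hence the same per-realization bound. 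Comparing the two: the paper's route is more elementary and keeps exact parallelism with Theorem~\ref{thm:upper_bound}, with no Legendre duality needed; your route is the standard sub-Gamma (Bernstein-type) machinery and is logically tighter, because non-negativity of a quadratic on the sub-interval $(0,c^{-1})$ alone does not force a non-positive discriminant --- for instance, when $\Delta(s)$ is sufficiently negative the leading coefficient flips sign, the parabola opens downward, and the discriminant is positive even though the constraint holds, so the paper's two-sided intermediate claim requires a caveat (the one-sided inequality it actually uses still survives, since in that regime the bound is trivial). The ``main obstacle'' you flag, verifying the inversion identity and that the optimizer lies in $(0,c^{-1})$, is a known closed-form fact for the sub-Gamma cumulant $\psi(\lambda)=\frac{\lambda^2\sigma^2}{2(1-c\lambda)}$, so your plan closes without difficulty.
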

    The proof is provided in Appendix \ref{apx:proof_gamma}. The upper bound in the sub-Gamma scenario differs slightly from \cref{equ:upper_bound}. Specifically, we can only provide an upper bound on the discrepancy between wireless risk and standard risk rather than the absolute value in \cref{equ:wireless_discrepancy}. To accommodate the more general loss functions, the resulting bound is relatively looser compared to the sub-Gaussian assumption. Such an extension broadens the applicability of our theoretical analysis of the inherent robustness in wireless distributed learning.
\end{remark}


\subsection{Task-aware $\epsilon$-Capacity} 
The wireless risk discrepancy is a statistical measure averaged over the probability space, providing a general view. However, to delve deeper into the possible capacity gains benefiting from the robustness, we pay special attention to each detailed event.
Specifically, for an individual inference $(z,w,s)$, where $z$, $w$, and $s$ represent a particular sample, hypothesis, and channel condition, respectively, the difference between the actual wireless loss $\hat{l}(\hat{y}(x,w,s),y)$ and the true risk $L$ would possibly exceed the upper bound $G$ in Theorem \ref{thm:upper_bound}.
This deviation depends on the variability of the performance deterioration across different realizations, and it could occur due to varying samples, hypotheses, and channel conditions. 

To better analyze the probabilistic nature of wireless distributed learning, we introduce the concept of \textit{task outage} as in \cite{outage-scan} and \cite{outage-semantic-text}, which extends the outage in conventional communication systems towards the domain of wireless distributed learning. 
The conventional concept of outage is the event that occurs whenever the capacity falls below a target rate $R$ \cite{fundamental-wireless-communication}. 
However, given the robustness of learning models, lossy transmission is acceptable in wireless distributed learning as long as the performance is not severely affected. Therefore, we focus on the event that the instantaneous loss difference transcends the upper bound $G$ of the wireless risk discrepancy.
\begin{definition}
    \label{thm:outage}
    For an individual wireless distributed learning $(z,w,s)$ with a transmission rate $R$, a \textit{task outage} occurs when the following condition is satisfied. Then we define a random variable $E$ as the corresponding indicator
    \begin{equation}
        \label{equ:task_error}
        E=\mathbb{I}\left(\left| \hat{l}(\hat{y}_R(x,w,s),y) - L\right| \geq \sigma\sqrt{2I(W,Z;S)}\right),
    \end{equation}
    where $\mathbb{I}(\cdot)$ is the indicator function, $\hat{y}_R(x,w,s) = f_d^{(w)}(\hat{m})$ is the output of the decoder, $\hat{m}$ is the received signal that is transmitted through wireless channel $s$ at rate $R$. 
    The set of wireless losses that causes task outage can be written as $\mathcal{E}=\left\{\hat{l}: E = 1 \right\}$.
    Thus, the task outage probability is expressed as 
    \begin{equation}
        \label{equ:outage_probability}
        p_e\triangleq \Pr\{E=1\}.
    \end{equation}
\end{definition}

In the conventional communication system, \cite{general-capacity} defines $\epsilon$-achievable rate in the context of lossy transmission with error $\epsilon$. 
Given the task outage probability, we adopt this notion and apply it to the lossy transmission in wireless distributed learning. 

\begin{definition}
    \label{thm:task_achievable_rate}
    An $(n,K,\epsilon)$ code has block length $n$, $K$ codewords, and task outage probability not larger than $\epsilon$. $R\geq 0$ is a \textit{task-aware $\epsilon$-achievable rate} if, for every $\delta>0$, there exist, for all sufficiently large $n$, $(n, K, \epsilon)$ codes with rate
    \begin{equation}
        \frac{\log K}{n} > R - \delta.
    \end{equation}
\end{definition}

\begin{theorem}
    \label{thm:max_rate}
    If $R\geq 0$ is task-aware $\epsilon$-achievable, then
    \begin{equation}
        \label{equ:max_rate}
        R \leq \frac{1}{1-\epsilon}C\triangleq C_\epsilon,
    \end{equation}
    where $C=\liminf_{n\to\infty}\sup_{M^n}\frac{1}{n}I(M^n;\hat{M}^n)$ represents the Shannon's channel capacity. The maximum task-aware $\epsilon$-achievable rate is called the task-aware $\epsilon$-capacity $C_\epsilon$.
\end{theorem}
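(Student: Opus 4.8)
The plan is to prove the converse in direct analogy with the weak converse of Shannon's channel coding theorem, but with the task outage event playing the role of the decoding error. First I would fix a sequence of $(n,M,\epsilon)$ codes whose rates approach $R$, as guaranteed by Definition~\ref{thm:task_achievable_rate}: for every $\delta>0$ and all sufficiently large $n$ there is such a code with $\frac{1}{n}\log M > R-\delta$ and task outage probability $p_e\le\epsilon$. I would introduce a message index $J$ drawn uniformly over the $M$ codewords, so that the encoder maps $J$ to the channel input $M^n$, the channel produces $\hat{M}^n$, and the decoder outputs $\hat{Y}$, yielding the Markov chain $J\to M^n\to\hat{M}^n\to\hat{Y}$. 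The crucial identification is that a non-outage event ($E=0$) corresponds to correct recovery of the intended message, so that the task outage probability dominates the probability of decoding error and Fano's inequality can be invoked.

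Second, I would carry out the standard entropy manipulation. Since $J$ is uniform, $\log M = H(J) = I(J;\hat{Y}) + H(J|\hat{Y})$. Fano's inequality applied to the recovery of $J$ from $\hat{Y}$, with failure probability controlled by $p_e$, gives $H(J|\hat{Y}) \le 1 + p_e\log M$, while the data processing inequality along the Markov chain gives $I(J;\hat{Y}) \le I(M^n;\hat{M}^n)$. Combining these yields
\begin{equation}
(1-p_e)\log M \le I(M^n;\hat{M}^n) + 1 \le \sup_{M^n} I(M^n;\hat{M}^n) + 1.
\end{equation}
Dividing by $n$ and using $p_e\le\epsilon<1$ produces $\frac{1}{n}\log M \le \frac{1}{1-\epsilon}\bigl(\frac{1}{n}\sup_{M^n} I(M^n;\hat{M}^n) + \frac{1}{n}\bigr)$.

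Finally, I would pass to the limit. For each sufficiently large $n$ the chosen code satisfies $R-\delta < \frac{1}{n}\log M$, so $R-\delta$ is bounded above by the right-hand side for all large $n$; taking the $\liminf$ in $n$ collapses the residual $\frac{1}{n}$ term and recovers $C=\liminf_{n\to\infty}\sup_{M^n}\frac{1}{n}I(M^n;\hat{M}^n)$, giving $R-\delta\le\frac{C}{1-\epsilon}$, and letting $\delta\to0$ finishes the argument. I expect the main obstacle to be not the algebra but the justification of the Fano step: one must argue rigorously that, in this task-aware setting, the absence of a task outage entails correct message recovery (or at least that the outage probability upper-bounds the decoding error), which is precisely the bridge between task reliability and classical communication reliability that makes $C_\epsilon$ meaningful. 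A secondary point to handle with care is that $I(M^n;\hat{M}^n)$ is evaluated at the code's specific input distribution and must be upper-bounded by the supremum over input distributions before taking limits.
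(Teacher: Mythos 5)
Your proposal follows essentially the same route as the paper's proof: the paper re-derives Fano's inequality inline with the task-outage indicator $E$ playing the role of the decoding-error indicator (obtaining $H(J|\hat{J}) \leq H(E) + \epsilon\log M$), then applies the data-processing inequality and the same division-by-$n$ and limiting argument you describe. The bridge you flag as the main obstacle---that absence of a task outage must entail correct message recovery---is exactly the step the paper uses implicitly when it drops the term $(1-p_e)H(J|\hat{J},E=0)$, so your concern is well placed but marks a shared assumption rather than a difference between your argument and the paper's.
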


\begin{proof}
    A message $J$, drawn from the index set $\{1,2,\cdots, K\}$, is the output of the model encoder and results in the signal $M^n(J)$ by a channel encoder. The received signal $\hat{M}^n$ is used to guess the message by a channel decoder $\hat{J}=g(\hat{M}^n)$.
    \begin{subequations}
        \label{equ:thm_proof1}
        \begin{align}
            H(J|\hat{J}) & \leq H(J, E | \hat{J}) \label{equ:thm2_proof1.1} \\
            & = H(E|\hat{J}) + H(J|\hat{J}, E) \\
            & \leq H(E) + H(J|\hat{J}, E) \label{equ:thm2_proof1.2} \\
            & = H(E) + p_eH(J|\hat{J}, E=1) \\
            & \notag \quad + (1-p_e)H(J|\hat{J}, E=0) \\
            & \leq H(E) + p_eH(J|\hat{J}, E=1) \\
            & \leq H(E) + p_e\log(|\mathcal{J}|-1) \label{equ:thm2_proof1.3} \\
            & \leq H(E) + \epsilon \log K\ ,
        \end{align}
    \end{subequations}
    where \cref{equ:thm2_proof1.1} is from data processing inequality, \cref{equ:thm2_proof1.2} is due to the fact that conditioning can only decrease entropy, \cref{equ:thm2_proof1.3} is because $H(J|\hat{J},E=1)$ is maximized when $p(J|\hat{J},E=1)$ is uniformly distributed, which yields $\log(\left|\mathcal{J}\right|-1)$. 
    Hence, for any coding scheme,
    \begin{subequations}
        \begin{align}
            \log K & = H(J) \label{equ:thm2_proof2.1} \\
            & = H(J|\hat{J}) + I(J;\hat{J}) \\
            & \leq H(J|\hat{J}) + I(M^n;\hat{M}^n) \label{equ:thm2_proof2.2} \\
            & \leq H(E) + \epsilon\log K + I(M^n;\hat{M}^n) \label{equ:thm2_proof2.3} \\
            & \leq H(E) + \epsilon\log K + \sup_{M^n}I(M^n;\hat{M}^n) \label{equ:thm2_proof2.4},
        \end{align}
    \end{subequations}
    where \cref{equ:thm2_proof2.1} is because $J$ is uniformly distributed, \cref{equ:thm2_proof2.2} is from data processing inequality, and \cref{equ:thm2_proof2.3} is from the result obtained with \cref{equ:thm_proof1}. 
    Therefore, for every $\delta>0$
    \begin{equation}
        R-\delta < \frac{1}{1-\epsilon}\left[\frac{1}{n}\sup_{M^n}I(M^n;\hat{M}^n) + \frac{H(E)}{n}\right],
    \end{equation}
    which, in turn, implies
    \begin{equation}
        R\leq \frac{1}{1-\epsilon} \liminf_{n\to\infty}\frac{1}{n}\sup_{M^n}I(M^n;\hat{M}^n).
    \end{equation}
\end{proof}

\begin{remark}
    It can be seen that the task-aware $\epsilon$-capacity falls back to Shannon's channel capacity when $\epsilon = 0$. 
    In this scenario, the task outage would never occur, indicating that the wireless loss $\hat{l}$ must fall within a bounded interval $(L-G,L+G)$. 
    Since loss functions in typical applications are generally unbounded or half-bounded, this creates a contradiction. The only solution is a degenerate distribution where the actual loss is deterministic, i.e., $\text{Var}(\hat{l}) = 0$. Thus, it corresponds to the error-free transmission in data-oriented communication, as discussed in Remark \ref{thm:special_case}.
    On the other hand, the special case of $p_e=1$ is unattainable for common distributions in AI applications, preventing the task-aware $\epsilon$-capacity from reaching infinity. 
\end{remark}
\begin{remark}
    We can observe from \cref{equ:max_rate} that the task-aware $\epsilon$-capacity increases with $p_e$.
    By exploring the definition of task outage probability $p_e=\Pr\left\{\left|\hat{l}-L\right|\geq G\right\}$, we perceive $p_e$ as a two-tailed probability of the distribution of $\hat{l}$. Several major factors contribute to the augmentation of probability. 
    \begin{itemize}
        \item First, an increase in the variance of $\hat{l}$ or the transition to a heavier-tailed distribution may be noted. This indicates a high BER in communication, which arises from a high transmission rate. Therefore, with the increase of task-achievable rate, a higher capacity gain could be obtained.
        \item Second, a small cut-off value $G$ also contributes to the increase of the two-tailed probability. 
        As analyzed in Remark \ref{thm:chain_rule}, a small wireless risk discrepancy upper bound $G$ helps alleviate the over-fitting on a specific channel and thus enhances the model's robustness against the impact of wireless channels. 
        Therefore, a high task-aware $\epsilon$-capacity might be achieved despite the channel distortion. 
        \item Third, the standard risk $L$ moving closer towards the tails of the distribution may increase $p_e$. This indicates a larger deviation of $L$ from the mean of $\hat{l}$. As a result, the wireless risk discrepancy approaches the upper bound $G$, which implies the model's capability of fully exploiting the intrinsic robustness in wireless distributed learning. Thus, a more substantial capacity gain could be possible.
    \end{itemize}
\end{remark}

\section{Numerical Estimation of Upper Bound}
\label{sec:numerical_example}
By deriving an information-theoretic upper bound for the performance deterioration, we shed light on how to exploit the task-aware $\epsilon$-capacity from the robustness in wireless distributed learning.
The key challenge is the calculation of $I(W,Z;S)$ in a practical setting. In this section, we address this issue by utilizing an NN-based model $f^{(W)}$ in wireless distributed learning as an example.

In the context of NNs, the hypothesis $W$ represents the weights of the NN, and the sample space $\mathcal{Z}$ represents the dataset.
As specified in Remark \ref{thm:chain_rule}, we turn to estimating the conditional mutual information $I(W;S|Z)$ for convenience. It can be written as the expectation of the KL divergence between $p(w|s,z)$ and $p(w|z)$ as, 
\begin{equation}
    \label{equ:KL_divergence_conditional}
    I(W ; S | Z) = \E_{p(s,z)}\left[D_{\text{KL}}(p(w | s, z) \| p(w | z))\right].
\end{equation}
Considering the Bayes equation $p(w|s,z)=\frac{p(w|z)p(s|w,z)}{p(s|z)}$, $p(w|z)$ can be interpreted as prior knowledge of the weights depending on a given dataset $z$. Meanwhile, $p(w|s,z)$ represents the posterior distribution of the weights given a dataset $z$ and the evidence of side information $s$. Prior $p(w|z)$ is the probability of the NN's weights before learning about the wireless channel, i.e., weights in standard distributed learning.
Following the common assumption in deep learning theory, we assume both $p(w|z)=\mathcal{N}(w|\theta_z,\Sigma_z)$ and $p(w|s,z)=\mathcal{N}(w|\theta_{s,z},\Sigma_{s,z})$ are Gaussian. The means $\theta_{z}, \theta_{s,z}$ are the yielded hypotheses of the learning algorithm given the corresponding knowledge. Thus, the KL divergence term has a closed-form expression as
\begin{equation}
    \label{equ:KL_divergence_Gaussian}
    \begin{split}
        D_{\text{KL}}(p(w | s, z) \| p(w | z)) = \frac{1}{2}\left[\log \frac{\det \left(\Sigma_{s,z}\right)}{\det \left(\Sigma_z\right)}-D\right.\\
        \left.+\left(\theta_{s,z}-\theta_z\right)^{\top} \Sigma_z^{-1}\left(\theta_{s,z}-\theta_z\right)+\operatorname{tr}\left(\Sigma_z^{-1} \Sigma_{s,z}\right)\right],
    \end{split}
\end{equation}
where $\det$ and $\operatorname{tr}$ represent the determinant and trace of a matrix, respectively, while $D$ represents the dimension of weight $w$, which is a constant for a specific model $f^{(w)}$. For simplicity, we further assume the covariances of both the prior and the posterior are proportional, which is a common practice in PAC-Bayes analysis \cite{proportional-covariance}. As a result, the logarithmic and trace terms in \cref{equ:KL_divergence_Gaussian} become constant and the conditional mutual information can be expressed as
\begin{equation}
    \label{equ:conditional_MI}
    I(W ; S | Z) \propto \E_{p(s,z)}\left[\left(\theta_{s,z}-\theta_z\right)^{\top} \Sigma_z^{-1}\left(\theta_{s,z}-\theta_z\right)\right].
\end{equation}

The prior covariance can be approximated using bootstrapping \cite{PAC-Bayes} by
\begin{equation}
    \label{equ:bootstrapping}
    \Sigma_z \approx \frac{1}{K} \sum_{k=1}^K\left(\theta_{(s,z)_k}-\theta_{s,z}\right)\left(\theta_{(s,z)_k}-\theta_{s,z}\right)^{\top},
\end{equation}
where $(s,z)_k$ is a bootstrap sampling set from space $\mathcal{S}\times\mathcal{Z}$ in the $k$-th experiment. During the training process of deep learning, it is prohibitive to learn the weight $\theta_{(s,z)_k}$ from $K$ bootstrapping samples. Therefore, we further utilize the influence function (IF) from robust statistics \cite{influence-function} to approximate the difference $\theta_{(s,z)_k}-\theta_{s,z}$. The following lemma adapted from \cite{IF} formalizes the approximation.

\begin{lemma}[Influence Function]
    \label{thm:IF}
    Assume in Poisson bootstrapping that the sampling weights, denoted by ${\xi}=(\xi_1,\cdots,\xi_n)^\top$, follow a binomial distribution $\xi_i\sim \text{Binomial}(n,\frac{1}{n})$, where $\lim_{n\to\infty} \text{Binomial}(n,\frac{1}{n})\to\text{Poisson}(1)$. 
    Given new hypotheses $\hat{\theta}_{(s,z),{\xi}}\triangleq \arg\min \frac{1}{n}\sum_{i=1}^n \xi_i \hat{l}_i(\theta)$ \footnote{To simplify the notation, we denote the wireless loss based on the weight $\theta$ as $\hat{l}_i(\theta) = l(\hat{y}, y)$, where $\hat{y}$ is the target result of model $f^{(\theta)}$.} and $\hat{\theta}_{s,z}\triangleq \arg\min \frac{1}{n}\sum_{i=1}^n \hat{l}_i(\theta)$, the approximation of the difference is defined by
    \begin{equation}
        \label{equ:IF}
        \hat{\theta}_{(s,z),{\xi}} - \hat{\theta}_{s,z} \approx \frac{1}{n}\sum_{i=1}^n (\xi_i-1)\psi_i = \frac{1}{n}{\psi}^\top({\xi}-{1}),
    \end{equation}
    where $\psi_i = -{H}_{\hat{\theta}_{s,z}}^{-1}\nabla_\theta \hat{l}_i(\hat{\theta}_{s,z})$ represents the IF, ${\psi} = (\psi_1,\cdots,\psi_n)^\top$ represents the collection of IFs, and ${H}_{\hat{\theta}_{s,z}} = \frac{1}{n}\sum_{i=1}^n\nabla_\theta^2\hat{l}_i(\hat{\theta}_{s,z})$ represents the Hessian matrix. As a result, the covariance can be approximated as
    \begin{equation}
        \label{equ:fisher_approx}
        \Sigma_z \approx \frac{1}{K} \sum_{k=1}^K\left(\hat{\theta}_{(s,z),{\xi}_k}-\hat{\theta}_{s,z}\right)\left(\hat{\theta}_{(s,z),{\xi}_k}-\hat{\theta}_{s,z}\right)^{\top} \approx \frac{1}{n}{F}_{\hat{\theta}_{s,z}}^{-1},
    \end{equation}
    where ${\xi}_k$ represents the sampling weights in the $k$-th experiment and ${F}_{\hat{\theta}_{s,z}}$ is the Fisher information matrix.
\end{lemma}

Now we are able to approximate the conditional mutual information in \cref{equ:conditional_MI} as 
\begin{equation}
    \label{equ:conditional_MI_approx}
    \begin{split}
        I(W;S|Z) & \propto n \E_{p(s,z)}\left[\left(\theta_{s,z}-\theta_z\right)^{\top} {F}_{\hat{\theta}_{s,z}}^{-1}\left(\theta_{s,z}-\theta_z\right)\right] \\
        & \approx n \left(\overline{\theta}_{s,z}-\theta_z\right)^{\top} {F}_{\hat{\theta}_{s,z}}^{-1}\left(\overline{\theta}_{s,z}-\theta_z\right),
    \end{split}
\end{equation}
where we employ quadratic mean $\overline{\theta}_{s,z}=\sqrt{\frac{1}{K}\sum_{k=1}^K\hat{\theta}_{(s,z),{\xi}_k}^2}$ to better estimate $\theta_{s,z}$ in the expectation. In practice, we could pre-train a model in standard distributed learning to obtain $\theta_{z}$, and we denote $\Delta\theta = \overline{\theta}_{s,z}-\theta_z$. By expanding the Fisher information matrix, \cref{equ:conditional_MI_approx} can be simplified as
\begin{equation}
    \label{equ:conditional_MI_efficient}
    \begin{split}
        I(W;S|Z) & \propto n \Delta\theta^{\top} \left[\frac{1}{T}\sum_{t=1}^T\nabla_\theta \hat{l}_t(\hat{\theta}_{s,z})\nabla_\theta \hat{l}_t(\hat{\theta}_{s,z})^\top\right]^{-1}\Delta\theta \\
        & = \frac{n}{T}\sum_{t=1}^T\left[\Delta\theta^\top\nabla_\theta \hat{l}_t(\hat{\theta}_{s,z}) \right]^2\triangleq \Tilde{I}(W;S|Z),
    \end{split}
\end{equation}
where $n$ denotes the number of samples and $T$ denotes the number of iterations to estimate Fisher information. In this way, the complexity of estimating the conditional mutual information becomes $O(DT)$, and the corresponding algorithm is summarized in Algorithm \ref{alg:approximation}.

\begin{algorithm}
    \fontsize{9pt}{9pt}\selectfont
    \caption{Efficient approximation of conditional mutual information $I(W;S|Z)$}
    \label{alg:approximation}
    \begin{algorithmic}[1]
        \REQUIRE Number of samples $n$, number of batches $B$, number of estimation iterations $T$, average parameters $(\rho,K)$.
        \ENSURE Approximated $\Tilde{I}(W;S|Z)$.
        \STATE Pre-train in standard distributed learning, $\theta_z\gets p(w|z)$.
        \FOR{$t=1:B$}
            \STATE Train in wireless distributed learning, calculate and store the gradient $\nabla l_t$.
            \STATE Back-propagation and update weight $\hat{\theta}_t$.
            \STATE Apply moving average hypotheses over $K$ iterations $\overline{\theta}_t\gets \sqrt{\rho \overline{\theta}_{t-1}^2 + \frac{1-\rho}{K}\sum_{k=0}^K\hat{\theta}_{t-k}^2}$.
        \ENDFOR
        \STATE $\Delta\theta\gets \overline{\theta}_B - \theta_z$, $\Delta F\gets 0$.
        \FOR{$t=1:T$}
            \STATE $\Delta F_t\gets \Delta F_{t-1} + (\Delta\theta ^\top \nabla l_t)^2$.
        \ENDFOR
        \STATE $\Tilde{I}(W;S|Z)\gets \frac{n}{T}\Delta F_T$.
    \end{algorithmic}
\end{algorithm}

To validate the effectiveness of the proposed task-aware $\epsilon$-capacity in Section \ref{sec:main_results}, we utilize a classic binary classification dataset of ASIRRA Cats \& Dogs \cite{dogs-vs-cats} and employ a 6-layer convolutional neural network (CNN) for wireless distributed learning system.
Specifically, the encoder $f_e^{(W)}$ at the UE is comprised of the first 3 layers and the decoder $f_d^{(W)}$ at the BS is comprised of the remaining part of the model. The parameter $K$ in Algorithm \ref{alg:approximation} is both the number of bootstrapping samples and the number of iterations for the moving average. This parameter determines both the estimation accuracy of $\overline{\theta}_{s,z}$ and the computational complexity. In our experiment, we set $K=5$ since it achieves similar accuracy as iterating over the entire training dataset $K=n$. Another parameter $\rho$ in Algorithm \ref{alg:approximation} is set to 0.99, as the typical setting in the RMSProp and Adam algorithms. The number of batches and the number of estimation iterations are set to $B=128$ and $T=5$, respectively.
Regarding the wireless channel setting, we use the additive white Gaussian noise (AWGN) and Rayleigh channels. 
Moreover, we adopt the assumption of a quasi-static fading model that the channel gain changes slowly and remains static during the transmission of a block.
The channel coding and modulation in the physical layer comply with the 3GPP specification \cite{3GPP-38.214} to ensure compatibility with modern wireless communication systems. In particular, the low-density parity-check (LDPC) code is used for channel coding, and modulations of QPSK, 16QAM, 64QAM, and 256QAM are supported. 

First, we train the model under both standard and wireless distributed learning. We adopt the differentiable quantization technique to enable training with LDPC codes and modulation.
Specifically, in the back-propagation of the training process, the derivative of the quantization function is zero everywhere except at integers, where it is undefined. Therefore, to enable transmission techniques like coding and modulation, we must replace its derivative in the back-propagation. The study in \cite{differentiable-quantize} summarizes various differentiable quantization methods that empirically work well, such as the soft-max weighted quantization in \cite{soft-quantize} and replacing quantization with additive uniform noise in \cite{uniform-noise-quantize}.
The empirical risks $\hat{L}$ and $\hat{L}(s)$ are averaged over multiple training results. Furthermore, we calculate the average difference across multiple simulations to derive $\hat{g}(\mu,f)=\frac{1}{n}\sum_i|\hat{L}(s_i)-\hat{L}|$. 
We approximate the upper bound $\hat{G}$ via Algorithm \ref{alg:approximation} and statistically compute the task outage probability $p_e$.
The outcomes are summarized in Table \ref{tab:comparison}, with the parentheses denoting the power of channel noise, modulation, and code rate.
From the table, the actual wireless risk discrepancy is maintained within the theoretical upper bound across various channel conditions. Hence, they provide a practical example of the proposed theorems on the wireless risk discrepancy and wireless distributed learning system's inherent robustness. 
However, we acknowledge that the proposed upper bound can be quite loose under certain conditions. The tightness of the bound has not yet been theoretically evaluated in practical systems. We expect in our future exploration that applying more constraints on the loss function or utilizing more refined inequalities may yield tighter bounds.
Additionally, the validation accuracy and the BER of wireless transmission are summarized in the last two columns. It can be seen that wireless distributed learning still performs well under a large BER, demonstrating its robustness against wireless channel noise.

\begin{table*}[htbp]
  \caption{Comparison between loss difference and upper bound}
  \label{tab:comparison}
  \centering
  \begin{tabular}{| c | c | c | c | c | c |}
    \hline
    Channel & $\hat{g}(\mu,f)$ & $\hat{G}$ & $p_e$ & Acc & BER \\
    \hline
    AWGN (5dB,16QAM,R=1/2)      & 0.0350 & 0.0518 & 0.2881 & 0.8152 & 0.1263 \\ \hline 
    AWGN (10dB,64QAM,R=1/2)     & 0.0124 & 0.0423 & 0.2697 & 0.8344 & 0.0670 \\ \hline 
    AWGN (15dB,64QAM,R=3/4)     & 0.0034 & 0.0390 & 0.2620 & 0.8318 & 0.0406 \\  \hline 
    Rayleigh (5dB,16QAM,R=3/8)  & 0.0443 & 0.0769 & 0.2856 & 0.8196 & 0.1458 \\ \hline 
    Rayleigh (10dB,64QAM,R=5/12) & 0.0270 & 0.0656 & 0.2746 & 0.8284 & 0.1076 \\ \hline 
    Rayleigh (15dB,64QAM,R=2/3) & 0.0222 & 0.0505 & 0.2721 & 0.8348 & 0.0944 \\ \hline 
  \end{tabular}
\end{table*}

Additionally, we illustrate the task-aware $\epsilon$-capacity resulting from the robustness in wireless distributed learning. 
Specifically, we perform the distributed inference task with a trained model in a wireless system at various transmission rates, achieved by employing different code rates and modulation schemes. To be compatible with the bit-level communication system, the output features of the encoder are quantized into binary sequences during the inference process. In particular, the floating-point precision elements are quantized to 1-bit fixed-point binary representations. 
In Figure \ref{fig:loss_vs_rate}, we plot the performance curves with loss on the x-axis and physical layer channel transmission rate on the y-axis for a clear presentation. 
The transmission rate $R$ is defined as the number of bits transmitted over the wireless channel in the physical layer. It is computed by $R=\log_2(Q)\cdot R_c$, where $Q$ is the modulation order and $R_c$ is the channel code rate.
The Shannon channel capacity $C$ and the task-aware $\epsilon$-capacity $C_\epsilon$ in \cref{equ:max_rate} are also noted in the figures. According to Definition \ref{thm:outage}, we identify the complement of task outage set $\mathcal{E}^c$ as the \textit{task-achievable region}. 
In the simulation, the loss is averaged over the dataset. It is valid to assert that the task outage probability remains below $p_e$ when $\left|\hat{l}-L\right|<G$. 
The task-achievable region is shaded in the figures.

It can be seen from the figures that there is an increase in loss within the task-achievable region. It is worth noting that the modern channel coding schemes we used operate under the constraint of finite code block lengths. Even when the transmission rate is below the channel capacity, these codes cannot guarantee a zero BER. As the transmission rate approaches the channel capacity, the BER increases due to the limitations of practical coding and decoding algorithms. Therefore, the distorted received signals could slightly affect the model's inference process, leading to an increase in the loss function. However, this increase does not necessarily indicate a significant degradation in classification accuracy. 
For instance, consider this binary classification task: if the model's output probabilities shift from (0.9, 0.1) to (0.6, 0.4), the loss increases due to reduced confidence, but the classification result remains the same.
Based on our study, when the actual wireless loss $\hat{l}$ does not exceed the upper bound, we can still consider it a robust inference under wireless channel noise.

It can be observed from Figure \ref{fig:loss_vs_rate} that there exists a transmission rate $R$ with which the performance falls at the boundary of the task-achievable region. Therefore, a successful task can be accomplished by employing such a transmission rate. Notably, the rate $R$ surpasses the Shannon channel capacity $C$, thus corroborating the intrinsic robustness in wireless distributed learning. Furthermore, we compute the task-aware $\epsilon$-capacity $C_\epsilon$ in \cref{equ:max_rate} using the task outage probability $p_e$ from Table \ref{tab:comparison} as $\epsilon$. It can be seen from Figure \ref{fig:loss_vs_rate} that the achievable rate $R$ falls below $C_\epsilon$, aligning with the proposed theorem.

\begin{figure}[htbp]
  \centerline{\subfigure[AWGN channel.]{\includegraphics[width=0.5\columnwidth]{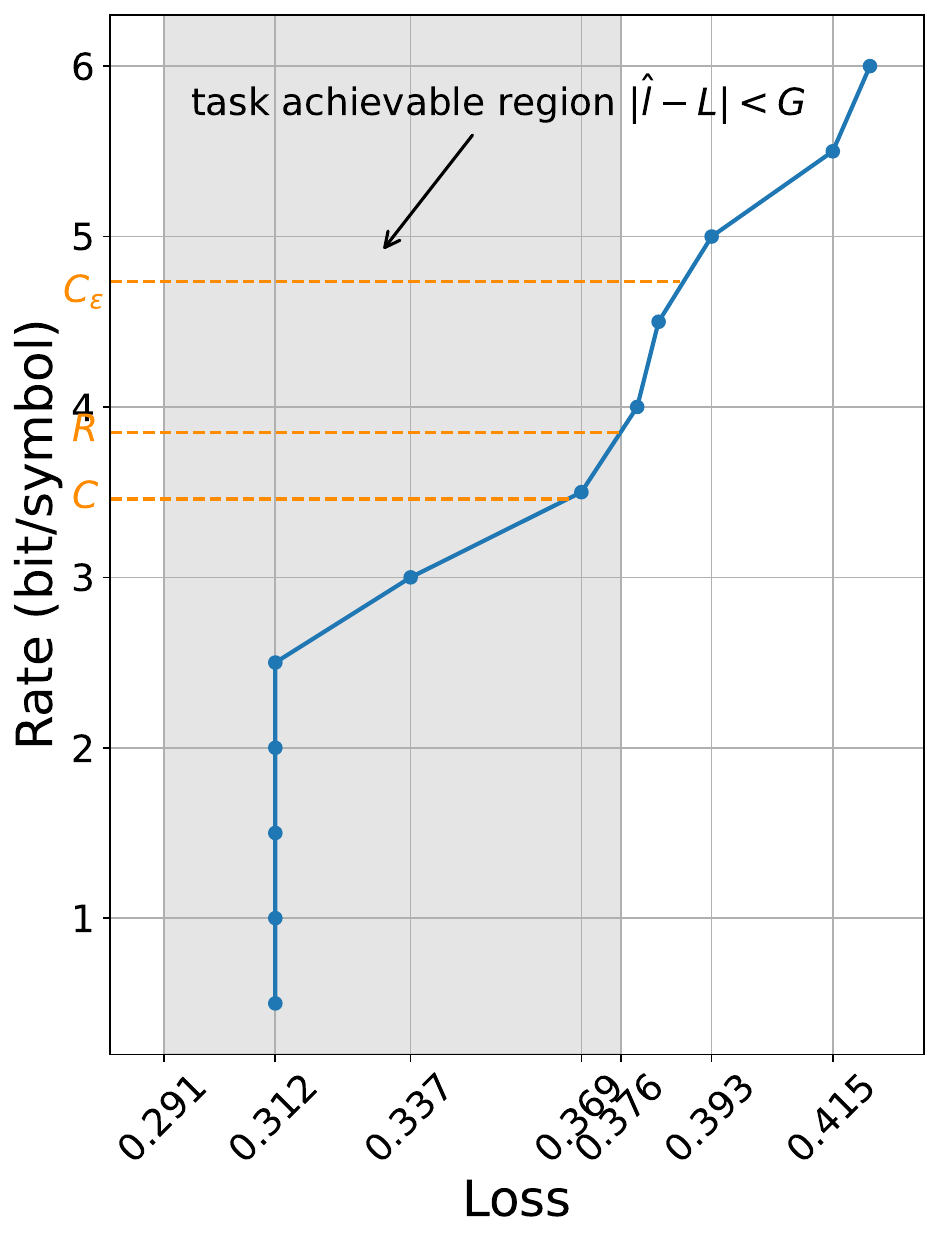}
      \label{fig:first_case}}
    \subfigure[Rayleigh channel.]{\includegraphics[width=0.5\columnwidth]{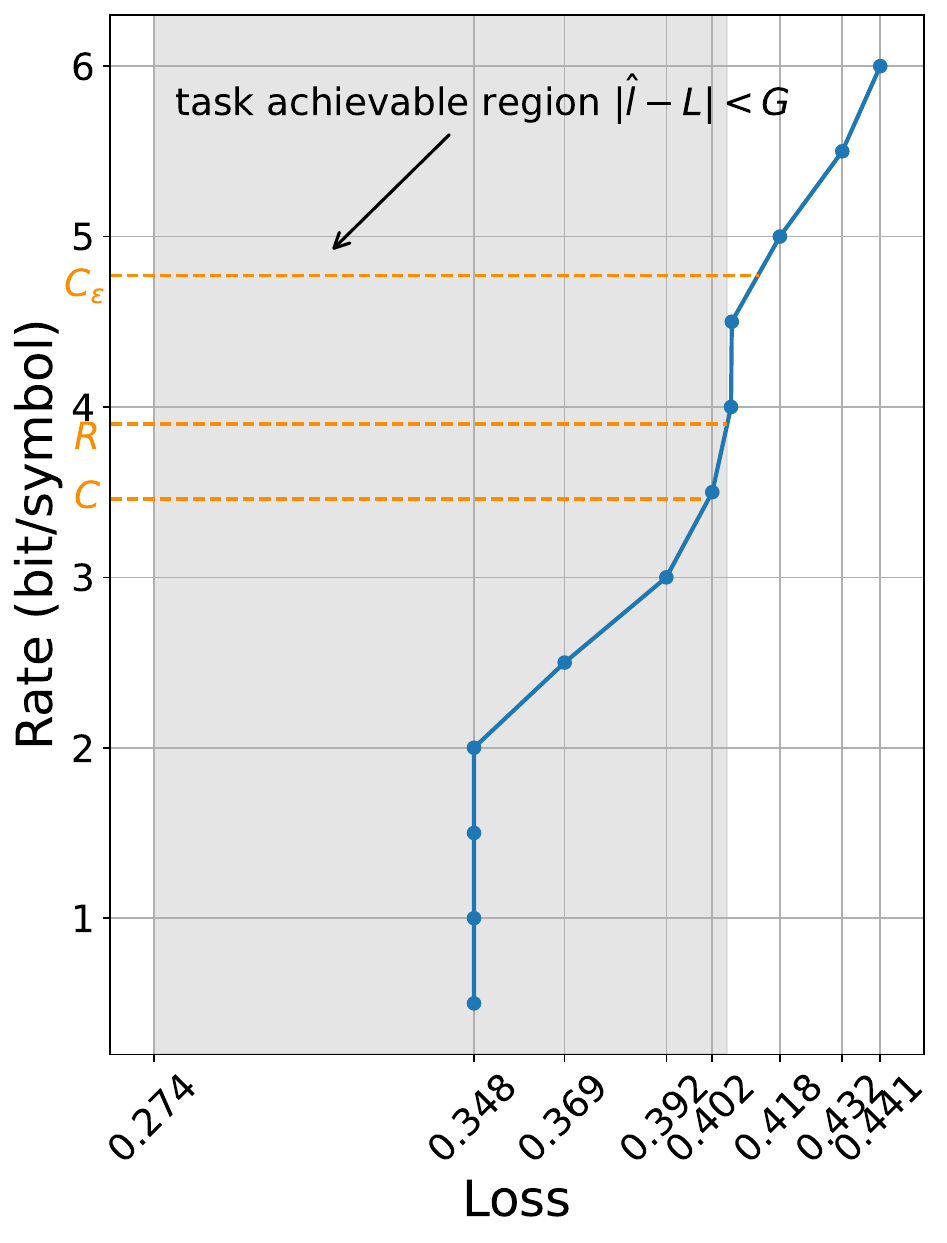}
      \label{fig:second_case}}}
  \caption{The performance curves at different rates in AWGN and Rayleigh channels. The shadows denote the task-achievable region.}
  \label{fig:loss_vs_rate}
\end{figure}

\section{Proposed robust training framework}
\label{sec:robust_training_framework}
As analyzed in Section \ref{sec:system_model}, a small wireless risk discrepancy indicates that the model is more robust against imperfect wireless channels. However, since the discrepancy is defined on expected statistical results, it is relatively difficult to estimate in advance. Therefore, the upper bound of the wireless risk discrepancy derived in Section \ref{sec:upper_bound} can serve as an alternative to measuring the robustness of wireless distributed learning. 
By minimizing the upper bound $G=\sigma\sqrt{2I(W,Z;S)}$, the model could generalize to various imperfect wireless channels while maintaining inference accuracy. 
For this purpose, we propose a robust training framework that optimizes the distributed feature decoder within an NN-based model framework.

\subsection{Accuracy and Robustness Trade-off}

The mutual information $I(W,Z;S)$ is determined by the model weights, dataset, and the channel side information. For a specific task in wireless distributed learning, the dataset, and the channel state are often given or fixed. 
Therefore, it is natural to minimize the mutual information in the training process by constructing an objective function as follows:
\begin{equation}
    \label{equ:pib_objective1}
    \min_{p(w|s,z)} \E_{p(w|s,z)} \E_{p(s,z)}\left[l(W,Z)\right] + \beta I(W,Z;S).
\end{equation}
The first term represents the wireless risk associated with the task performance of wireless distributed learning. The second one corresponds to the mutual information in the derived upper bound, which aims to minimize the performance deterioration due to the wireless channel noise.
The parameter $\beta>0$ is the weight of the regularization term.

For classification tasks, the loss function is the cross-entropy between the distributions of wireless prediction $p(y|x,w,s)$ and the target $p(y|x)$. As shown in \cite{CE-MI}, minimizing the cross-entropy is equivalent to maximizing the mutual information. Hence, the objective function in \cref{equ:pib_objective1} is equivalent to 
\begin{equation}
    \label{equ:pib_objective2}
    \max_{p(w|s,z)} I(Y;W,S|X) - \beta I(W,Z;S).
\end{equation}
Similar to the IB framework, the objective function demonstrates a trade-off between the sufficiency of information on the task target label $Y$ and the minimality of the learned weight $W$.
This trade-off aims for learned weights that have minimal information about the channel state information to avoid over-fitting while having sufficient information about the labels \cite{minimal-sufficient}. Specifically, the set of sufficient weights $\mathcal{S}$ and minimal sufficient weights $\mathcal{M}$ are defined as follows:
\begin{equation}
    \label{equ:minimal-sufficient}
    \begin{split}
        \mathcal{S} & \coloneqq \arg\max_{W'} I(Y;W',S|X),\\
        \mathcal{M} & \coloneqq\arg \min_{W'\in\mathcal{S}}I(W',Z;S).
    \end{split}
\end{equation}
Based on the chain rule of conditional mutual information, the first term could be expanded as
\begin{equation}
    I(Y;W,S|X) = I(Y;W|X) + I(Y;S|W,X).
\end{equation}
Combined with \cref{equ:MI_second}, we have
\begin{equation}
    \label{equ:MI_tradeoff}
    I(Y;W,S|X) - I(Y;W|X) = I(W,Z;S) - I(W,X;S).
\end{equation}
The second term on the right-hand side could be further expressed as $I(W,X;S) = I(X;S) + I(W;S|X)=I(W;S|X)$, where $I(X;S)=0$ since $Z$ and $S$ are independent. It can be seen from \cref{equ:MI_tradeoff} that the sufficiency $I(Y;W,S|X)$ increases with minimality $I(W,Z;S)$, thereby establishing the trade-off between them.

The objective function is a weighted sum of two mutual information terms with a parameter controlling the trade-off. Specifically, task accuracy is characterized by the amount of information about the task in the model weights, quantified by $I(Y;W,S|X)$. Intuitively, if the model learns more about the target and the wireless channel conditions, the BS could recover a high-quality feature leading to higher accuracy. However, as the minimal sufficient statistic in the IB framework, learning too much about the dataset, measured by $I(W,Z;S)$, may result in over-fitting. Thus, there is an inherent trade-off between the inference performance and the robustness against channel distortions. The objective function in \cref{equ:pib_objective2} explicitly formulates such an accuracy-robustness trade-off for wireless distributed learning system.

\subsection{Optimal Solution}
Given the dataset $Z$ and channel side information $S$, our goal is to find the optimal weights $p(w|s,z)$ that minimize the objective function. The variational method is a common way to solve the mutual information trade-off in \cref{equ:pib_objective1}. First, we consider the conditional distribution constraint of $p(w|s,z)$. The problem could be formulated as
\begin{align}
    \begin{split}
        \min_{p(w|s,z)} & \E_{p(w|s,z)} \E_{p(s,z)}\left[l(W,Z)\right] + \beta I(W,Z;S),\\
        & \text{s.t.}\quad \E \left[p(w|s,z)\right] =1.
    \end{split}
\end{align}
Using the Lagrange multiplier method, we have the following lemma for the optimal solution:
\begin{lemma}
\label{thm:optimal_solution}
The optimal posterior $p(w|s,z)$ that minimizes \cref{equ:pib_objective1} satisfies the equation
\begin{equation}
    \label{equ:optimal_solution}
    p(w|s,z) = \frac{p(w|z)}{A(s,z)}\exp\left(-\frac{1}{\beta}\hat{L}_{s,z}(w)\right),
\end{equation}
where $A(s,z)$ is a normalizing constant.
\end{lemma}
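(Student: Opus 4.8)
The plan is to solve the constrained minimization in \cref{equ:pib_objective1} via the calculus of variations, treating $p(w|s,z)$ as the unknown function and introducing a Lagrange multiplier for the normalization constraint $\E[p(w|s,z)]=1$. First I would write the mutual information term explicitly as a functional of $p(w|s,z)$. Recall from \cref{equ:KL_divergence_conditional} that $I(W,Z;S)=I(W;S|Z)=\E_{p(s,z)}\big[D_{\text{KL}}(p(w|s,z)\|p(w|z))\big]$, so the objective becomes an integral over $w$ (for each fixed $s,z$) of terms involving $p(w|s,z)\,\hat{L}_{s,z}(w)$ and $\beta\,p(w|s,z)\log\frac{p(w|s,z)}{p(w|z)}$. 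Here I would identify $\hat{L}_{s,z}(w)$ with the per-instance wireless loss $l(w,z)$ so that its expectation under $p(w|s,z)$ recovers the wireless-risk term.

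Next I would form the Lagrangian by adjoining the constraint with multiplier, writing
\begin{equation}
    \mathcal{L} = \int p(w|s,z)\hat{L}_{s,z}(w)\,dw + \beta\int p(w|s,z)\log\frac{p(w|s,z)}{p(w|z)}\,dw + \lambda\left(\int p(w|s,z)\,dw - 1\right).
\end{equation}
Taking the functional (Gateaux) derivative with respect to $p(w|s,z)$ and setting it to zero gives the stationarity condition
\begin{equation}
    \hat{L}_{s,z}(w) + \beta\left(\log\frac{p(w|s,z)}{p(w|z)} + 1\right) + \lambda = 0.
\end{equation}
Solving this pointwise for $p(w|s,z)$ yields $p(w|s,z) = p(w|z)\exp\big(-\tfrac{1}{\beta}\hat{L}_{s,z}(w)\big)\exp\big(-1-\tfrac{\lambda}{\beta}\big)$, and absorbing the $w$-independent factor $\exp(-1-\lambda/\beta)$ together with the normalization requirement into a single constant $A(s,z)$ gives exactly \cref{equ:optimal_solution}.

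Finally I would fix $A(s,z)$ by enforcing $\int p(w|s,z)\,dw=1$, which forces $A(s,z)=\int p(w|z)\exp\big(-\tfrac{1}{\beta}\hat{L}_{s,z}(w)\big)\,dw$, confirming it is the claimed normalizing constant (the partition function) depending only on $s$ and $z$. The main obstacle is mostly one of rigor rather than computation: the variational derivative argument only certifies a stationary point, so to claim it is the \emph{minimizer} I would note that the objective is convex in $p(w|s,z)$ — the KL term is convex and the loss term is linear in $p$ — so the unique stationary point of this convex functional over the affine constraint set is indeed the global minimum. A secondary subtlety is ensuring the exchange of the functional derivative with the expectation $\E_{p(s,z)}$ is legitimate, which holds because the optimization decouples across each $(s,z)$ pair, allowing the problem to be solved independently for every fixed $s,z$.
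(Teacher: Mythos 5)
Your proposal is correct and follows essentially the same route as the paper's proof: a Lagrangian with a multiplier enforcing normalization, the reduction $I(W,Z;S)=I(W;S|Z)=\E_{p(s,z)}\left[D_{\text{KL}}(p(w|s,z)\|p(w|z))\right]$ so the problem decouples over each $(s,z)$, first-order stationarity of the functional derivative yielding the Gibbs form, and normalization identifying $A(s,z)$ as the partition function $\int p(w|z)\exp\left(-\tfrac{1}{\beta}\hat{L}_{s,z}(w)\right)dw$. Your closing convexity argument (KL convex, loss linear, affine constraint set) is a small but genuine improvement in rigor over the paper, which only verifies stationarity; incidentally, your functional derivative carries the correct $+1$ term where the paper's displayed derivative has a sign typo.
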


\begin{proof}
    By introducing the Lagrange multipliers $\lambda(s,z)$ for the normalization of the conditional distribution, the Lagrangian function could be written as
    \begin{align}
        \label{equ:lagrange}
        \begin{split}
            \mathcal{L}= & \E_{p(w|s,z)} \left[\hat{L}_{s,z}(w)\right] + \beta \E_{p(w,s,z)}\left[\frac{\log p(w|s,z)}{p(w|z)}\right]\\ 
            & + \E_{p(s,z)}\lambda(s,z)\left(\E \left[p(w|s,z)\right] - 1\right),
        \end{split}
    \end{align}
    where the empirical wireless risk given a certain dataset and channel condition is denoted as $\hat{L}_{s,z}(w) = \frac{1}{n} \sum_{i=1}^n l(\hat{y}_i,y_i)$. 
    Additionally, the second term in \cref{equ:lagrange} follows from $I(W,Z;S)=I(W;S|Z)$ in Remark \ref{thm:chain_rule}.
    Taking derivatives with respect to the posterior $p(w|s,z)$ for given $s$ and $z$ results in
    \begin{align}
        \begin{split}
            \frac{\partial\mathcal{L}}{\partial p(w|s,z)} = \hat{L}_{s,z}(w) + \beta \left(\frac{\log p(w|s,z)}{p(w|z)}-1\right) + \lambda(s,z).
        \end{split}
    \end{align}
    Let $\frac{\partial\mathcal{L}}{\partial p(w|s,z)}=0$ and the variational result could be obtained as follows:
    \begin{align}
        \begin{split}
            \log p(w|s,z) & = -\frac{1}{\beta}\hat{L}_{s,z}(w) + \log p(w|z) - \Tilde{\lambda}(s,z), \\
            p(w|s,z) & = p(w|z) \exp\left(-\frac{1}{\beta}\hat{L}_{s,z}(w)\right)\exp\left(-\Tilde{\lambda}(s,z)\right),
        \end{split}
    \end{align}
    where $\Tilde{\lambda}(s,z) = \frac{\lambda(s,z)}{\beta} + 1$. The second exponential term $\exp\left(-\Tilde{\lambda}(s,z)\right)$ could be considered as the partition function that normalizes the conditional distribution. Let the normalized function $A(s,z)$ be defined as 
    \begin{equation}
        \begin{split}
            A(s,z) & = \exp\left(-\Tilde{\lambda}(s,z)\right) \\
            & = \E\left[p(w|z) \exp\left(-\frac{1}{\beta}\hat{L}_{s,z}(w)\right)\right],
        \end{split}
    \end{equation}
    thereby completing the proof. 
\end{proof}

To obtain the optimal posterior in \cref{equ:optimal_solution}, we further express it in the exponential form
\begin{equation}
    p(w|s,z) = \frac{1}{A(s,z)}\exp\left(\frac{1}{\beta} U(w)\right),
\end{equation}
where $U(w) = \hat{L}_{s,z}(w) - \beta \log p(w|z)$. 
This reveals that $p(w|s,z)$ follows a typical Gibbs distribution $\pi(w)\propto \exp\left(-\frac{1}{\beta} U(w)\right)$ with an energy function $U(w)$ and a temperature parameter $\beta$ corresponding to the regularization parameter in \cref{equ:pib_objective1}. 
According to \cite{Gibbs}, if $U$ has a unique global minimum, is sufficiently smooth, and properly behaved at the boundaries, the Langevin diffusion $X(t)$ in the following stochastic differential equation converges to the stationary distribution $\pi(x)\propto \exp\left(-\frac{1}{\beta}U(x)\right)$
\begin{equation}
    \label{equ:SDE}
    dX(t) = -\nabla U(X(t))dt + \sqrt{2\beta}dB(t),\quad t\geq 0,
\end{equation}
where $B(t)$ is the standard Brownian motion in $\R^D$. Specifically, the Gibbs distribution is the unique invariant distribution of \cref{equ:SDE}, and the distribution of $X(t)$ converges rapidly to $\pi(x)$ as $t\to\infty$. 
Furthermore, for sufficiently small values of the temperature $\beta$, the Gibbs distribution concentrates around the minimum of $U$. Thus, with high probability, a sample from the Gibbs distribution is an almost-minimizer of the energy function $U(X(t))$ \cite{almost-minimizer}.

Applying the Euler-Maruyama discretization \cite{discretization} to \cref{equ:SDE} yields the discrete-time Markov process:
\begin{equation}
    \label{equ:SGLD}
    X_{k+1} = X_k - \eta_k g_k + \epsilon_k\sqrt{2\eta_k \beta} ,
\end{equation}
where $g_k$ is a conditionally unbiased estimator of the energy function gradient $\nabla U(X_k)$, $\epsilon_k\sim \mathcal{N}(\mathbf{0},\mathbf{I}_D)$ is a standard Gaussian noise vector, and $\eta_k>0$ is the learning rate. This recursion corresponds to the weight update process of Stochastic Gradient Langevin Dynamics (SGLD) \cite{SGLD}. It has been demonstrated in \cite{Gibbs-convergence1} and \cite{Gibbs-convergence2} that the discrete SGLD recursion accurately tracks the Langevin diffusion. 
Under the conditions that $\sum_t^\infty \eta_t\to \infty$, $\sum_t^\infty \eta_t^2\to0$, and an annealing temperature parameter $\beta$, the distribution of $X_k$ will approximate the Gibbs distribution for sufficiently large $k$. Therefore, for large enough $k$, the output of the SGLD algorithm is also an almost-minimizer of the energy function $U(X_k)$.

Consequently, it is natural to adopt the SGLD algorithm to approximate the optimal posterior $p(w|s,z)$ in \cref{equ:optimal_solution} by setting the model weights $w$ as the Langevin diffusion $X_k$. 
The SGLD algorithm is capable of efficiently solving large-scale posterior inference problems. It achieves an exponential rate of convergence for a non-convex objective function \cite{Gibbs-convergence1}.
Using this algorithm, we can iteratively calculate the optimal weights through the following process:
\begin{equation}
    w_{k+1}=w_k - \eta_k\nabla U(w_k) + \epsilon_k\sqrt{2\eta_k\beta}.
\end{equation}
To compute the gradient of the energy function $U(w) = \hat{L}_{s,z}(w) - \beta \log p(w|z)$, we adopt the assumption from Section \ref{sec:numerical_example} that $p(w|z)=\mathcal{N}(w|\theta_z,\Sigma_z)$ is a Gaussian distribution in the deep learning scenario. 
Therefore, we have
\begin{equation}
    -\log p(w|z)\propto (w-\theta_z)^\top\Sigma_z^{-1}(w-\theta_z) + \log \det \Sigma_z,
\end{equation}
where the determinant term is equivalent to the sum of the logarithms of the eigenvalues $\log \det \Sigma_z = \sum_{i=1}^D\lambda_i$. 
Under this assumption, the gradient of the energy function is given by
\begin{equation}
    \nabla U(w) = \frac{\partial \hat{L}_{s,z}(w)}{\partial w} - 2\beta \Sigma_z^{-1}w,
\end{equation}
where the first term is the gradient of the loss function, which can be calculated using back-propagation.
The detailed SGLD training process for the proposed objective function \cref{equ:pib_objective1} is summarized in Algorithm \ref{alg:SGLD}. 
\begin{algorithm}
    \fontsize{9pt}{9pt}\selectfont
    \caption{SGLD training process}
    \label{alg:SGLD}
    \begin{algorithmic}[1]
        \REQUIRE Number of samples $n$, number of batches $B$, learning rate $\eta$, temperature $\beta$, decay scheme $\phi_\eta,\phi_\beta$.
        \ENSURE Model weights $\{w_k\}$ satisfying $p(w|s,z)$.
        \STATE Pre-train in standard distributed learning, $\theta_z\gets p(w|z)$.
        \REPEAT
            \STATE Calculate the mini-batch gradient of energy function $g_{k-1} \gets \nabla \left( -\frac{B}{n} \sum_b \log p(y_b | x_b, w_{k-1}, s_b)\right.$
            $\left.- 2\beta_{k-1} \log p(w_{k-1}|z) \right)$.
            \STATE Sample Gaussian random noise $\epsilon_k\gets \mathcal{N}(\mathbf{0},\mathbf{I}_D)$.
            \STATE Weights update $w_k\gets w_{k-1} - \eta_{k-1} g_k + \epsilon_k \sqrt{2\eta_{k-1} \beta_{k-1}} $.
            \STATE Learning rate and temperature decay $\eta_k\gets \phi_\eta(\eta_{k-1}),\beta_k\gets \phi_\beta(\beta_{k-1})$.
            \STATE $k\gets k+1$.
        \UNTIL{The weights $\{w_k\}$ converge.}
    \end{algorithmic}
\end{algorithm}

\section{Simulations}
\label{sec:simulation}
In this section, we evaluate the performance of the proposed robust training framework in Section \ref{sec:robust_training_framework} by comparing it with the vanilla and IB-based training mechanisms.

\subsection{Simulation Setup}
\begin{enumerate}
    \item \textbf{Datasets}: In this section, we select four benchmark datasets for image classification, including ASIRRA Cats \& Dogs, CIFAR-10, CIFAR-100 \cite{CIFAR}, and Tiny-ImageNet \cite{tinyimagenet}. The CIFAR-10 dataset consists of 60,000 color images in 10 classes with 6,000 images per class, including 50,000 training images and 10,000 test images. The CIFAR-100 dataset consists of the same number of color images as CIFAR-10, but it has 100 classes with 600 images per class. There are 500 training images and 100 testing images per class. The 100 classes in the CIFAR-100 are grouped into 20 superclasses. Each image comes with a "fine" label (the exact class) and a "coarse" label (the superclass). 
    The Tiny-ImageNet dataset is a subset of the ILSVRC-2012 classification dataset. It consists of 200 classes, and for each class it provides 500 training images and 50 validation images. The images have been downsampled to 64$\times$64$\times$3 pixels. For the image classification tasks, we adopt classification accuracy as the performance metric. 
    \item \textbf{NN architectures}: To demonstrate the effectiveness of the proposed robust training framework, we implement a wireless distributed learning system with two classic CNNs, the VGG network \cite{VGG} and the Residual Network (ResNet) \cite{ResNet}, and a Vision Transformer (ViT) \cite{vit}. Specifically, VGG11 and ResNet18 models are utilized for the CIFAR-10 classification task, VGG16 and ResNet34 models are employed for the CIFAR-100 classification task, and ViT is used for the Tiny-ImageNet classification task. As it is challenging to train ViT from scratch, we adopt the pre-trained version, ViT-small-patch16-224. 
    For the Cats \& Dogs dataset, the 6-layer CNN in Section \ref{sec:numerical_example} is adopted. The detailed architectures for the encoder $f_e^{(W)}$ and decoder $f_d^{(W)}$ are presented in Appendix \ref{sec:architecture}.
    \item \textbf{Baselines}: We compare the proposed robust training framework in wireless distributed learning against the vanilla training technique and the mutual information method in \cite{IB}, denoted as `Vanilla' and `MI', respectively. In particular, the vanilla method trains NNs with plain cross-entropy loss and Adam optimizer \cite{Adam}, which is a well-recognized stochastic optimization method. 
    The MI method utilizes an approximate KL divergence between the distributions of target variable $\hat{Y}$ and received signal $\hat{M}$ as the regularization term. Such a regularization is based on the IB principle, aiming at reducing the redundancy in the features while maximizing the task-relevant information.
    \item \textbf{Hyperparameters}: For the parameters in Algorithm \ref{alg:approximation}, we use the same setting as the example in Section \ref{sec:numerical_example}, where $K=5$, $\rho=0.99$, $B=128$, and $T=5$. In the SGLD training process, we use varied parameters $\beta$ and learning rate $\eta$ for different models and datasets. A high learning rate would lead to divergence during the training process, and a small learning rate would result in a slow convergence. Meanwhile, a high regularization parameter $\beta$ makes the model fail to learn information from the task, while a small $\beta$ might not sufficiently regularize the model. Therefore, we tune these parameters in our experiments for a stable convergence. The specific values are summarized in Table \ref{tab:parameters}.
    \item \textbf{Communication system}: In this experiment, we assume the same quasi-static channel for the wireless environment as mentioned in Section \ref{sec:numerical_example}. Similarly, the physical layer complies with the 3GPP standard. For simplicity, we only adopt modulation during the transmission process, including BPSK, QPSK, 8QAM, 16QAM, 32QAM, and 64QAM. 
    To align with modern communication systems, the differentiable quantization technique is also adopted to enable training with modulation. As for the wireless inference, we normalize the output features of the encoder and quantize the floating-point precision features into 1-bit fixed-point binary representations.
\end{enumerate}

\begin{table}[htbp]
  \caption{Learning rate and regularization parameter}
  \label{tab:parameters}
  \centering
  \begin{tabular}{| m{0.2\columnwidth}<{\centering} | m{0.3\columnwidth}<{\centering} | m{0.1\columnwidth}<{\centering} | m{0.1\columnwidth}<{\centering} |}
    \hline
    Dataset & Model & $\eta$ & $\beta$ \\ \hline
    Cats \& Dogs & 6-layer CNN & 1e-3 & 0.5 \\ \hline
    CIFAR-10 & VGG11 & 1e-1 & 0.001 \\ \hline
    CIFAR-10 & ResNet18 &  1e-1 & 0.1 \\ \hline
    CIFAR-100 & VGG16 & 1e-1 & 0.001 \\ \hline
    CIFAR-100 & ResNet34 & 1e-1 & 0.5 \\ \hline
    Tiny-ImageNet & ViT-small-patch16-224 & 1e-3 & 0.1 \\ \hline
  \end{tabular}
\end{table}

We first jointly train the encoder and the decoder on the corresponding dataset under standard distributed learning. Next, we fine-tune the model using the proposed robust training framework in Algorithm \ref{alg:SGLD}, vanilla Adam, and the MI method in wireless distributed learning where the SNR of the channel is 10 dB and the modulation scheme is QPSK. 
During the training process, we calculate and record the mutual information $I(W,Z;S)$ in the derived upper bound using Algorithm \ref{alg:approximation}. To demonstrate the superiority of our proposed robust training framework in terms of robustness against channel distortions, we compare the inference accuracy of the trained models in wireless distributed learning. Specifically, we conduct wireless distributed learning in a channel at an SNR of 10 dB using multiple transmission rates, achieved by employing different modulation schemes.

\begin{figure}[htbp]
  \centerline{\subfigure[]{\includegraphics[width=0.5\columnwidth]{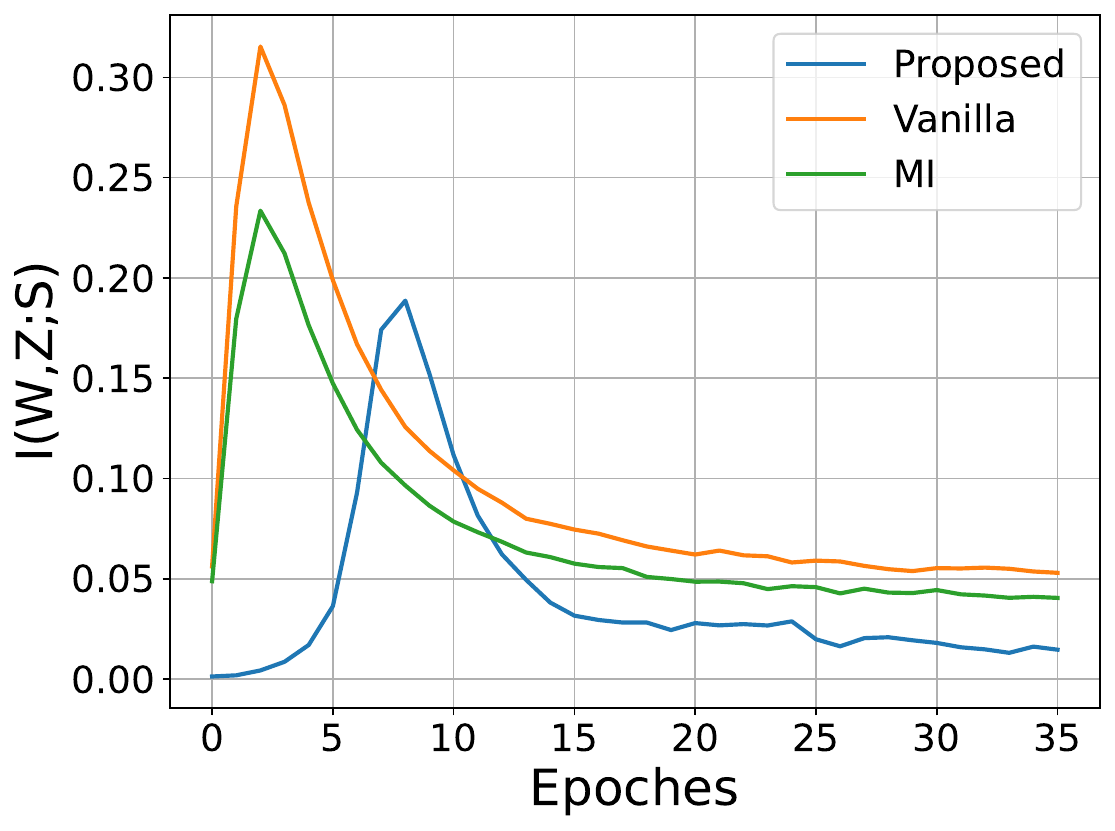}
      \label{fig:catdog_info}}
    \subfigure[]{\includegraphics[width=0.5\columnwidth]{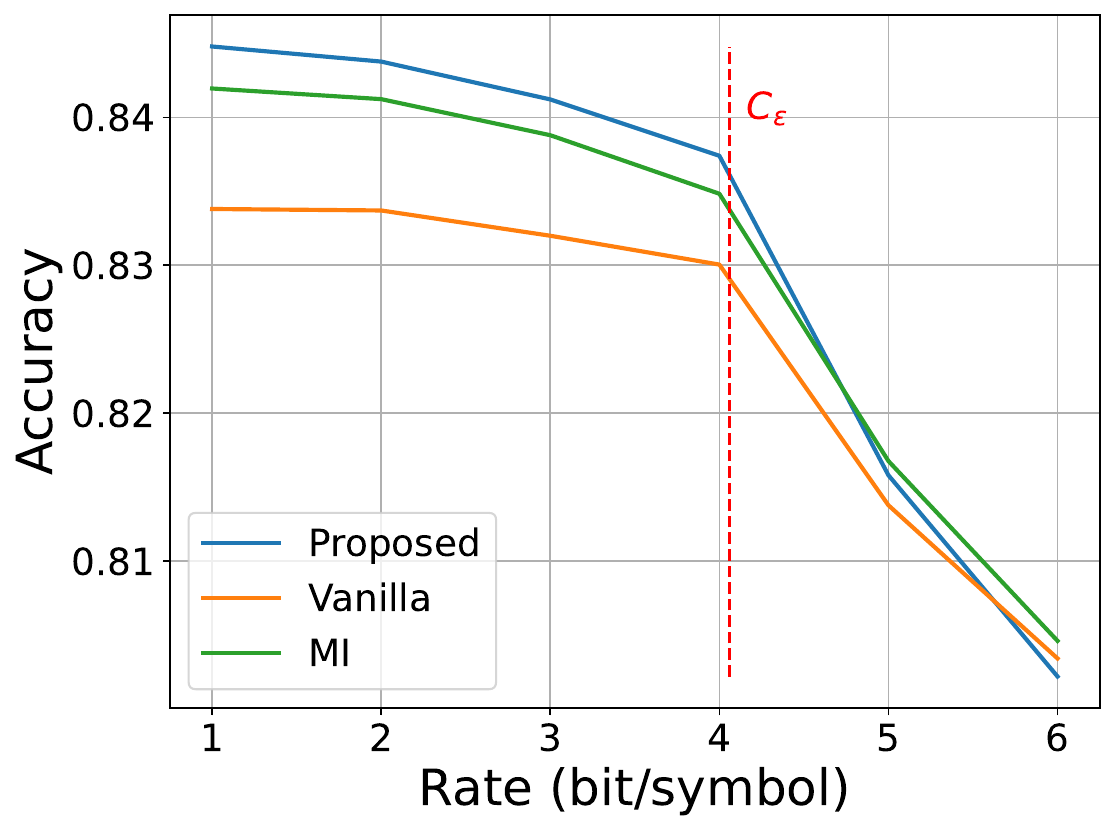}
      \label{fig:catdog_acc_rate}}}
  \caption{6-layer CNN on Cats \& Dogs Dataset. \textbf{(a)}: The approximated mutual information $\Tilde{I}(W,Z;S)$ in the upper bound during the training. \textbf{(b)}: The accuracy on the test dataset w.r.t the transmission rates.}
  \label{fig:catdog}
\end{figure}

\begin{figure}[htbp]
  \centerline{\subfigure[]{\includegraphics[width=0.5\columnwidth]{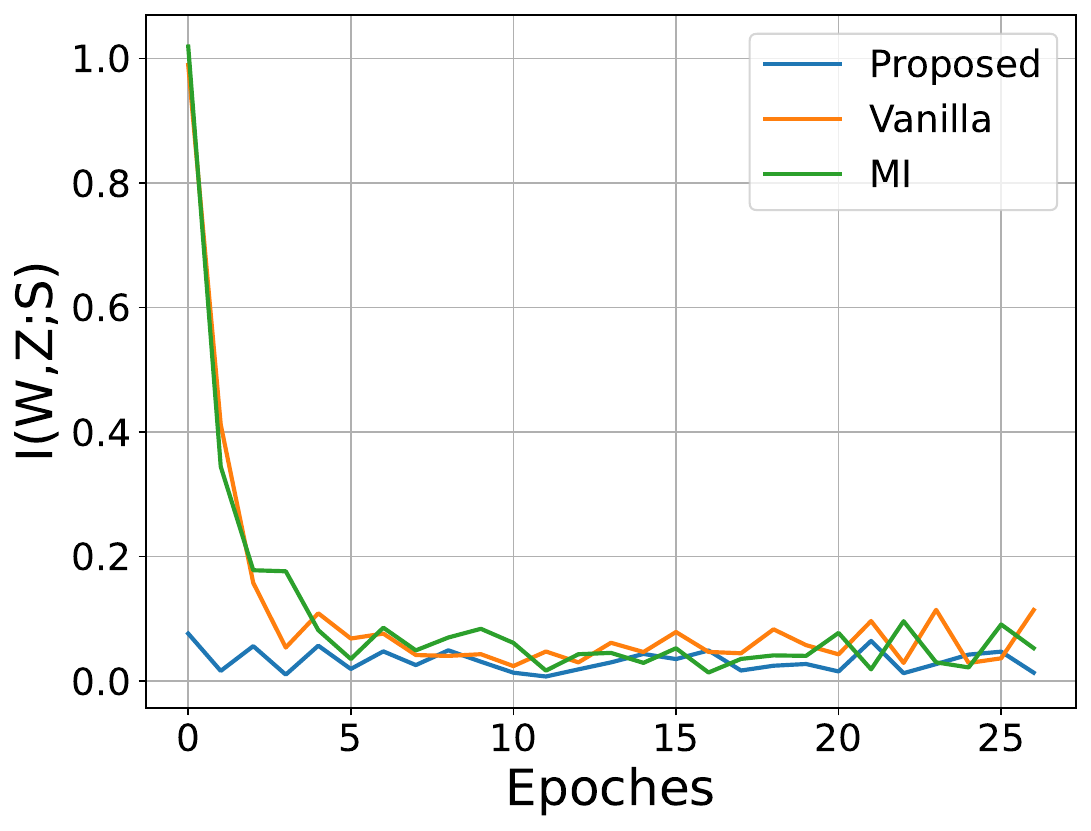}
      \label{fig:cifar_vgg_info}}
    \subfigure[]{\includegraphics[width=0.5\columnwidth]{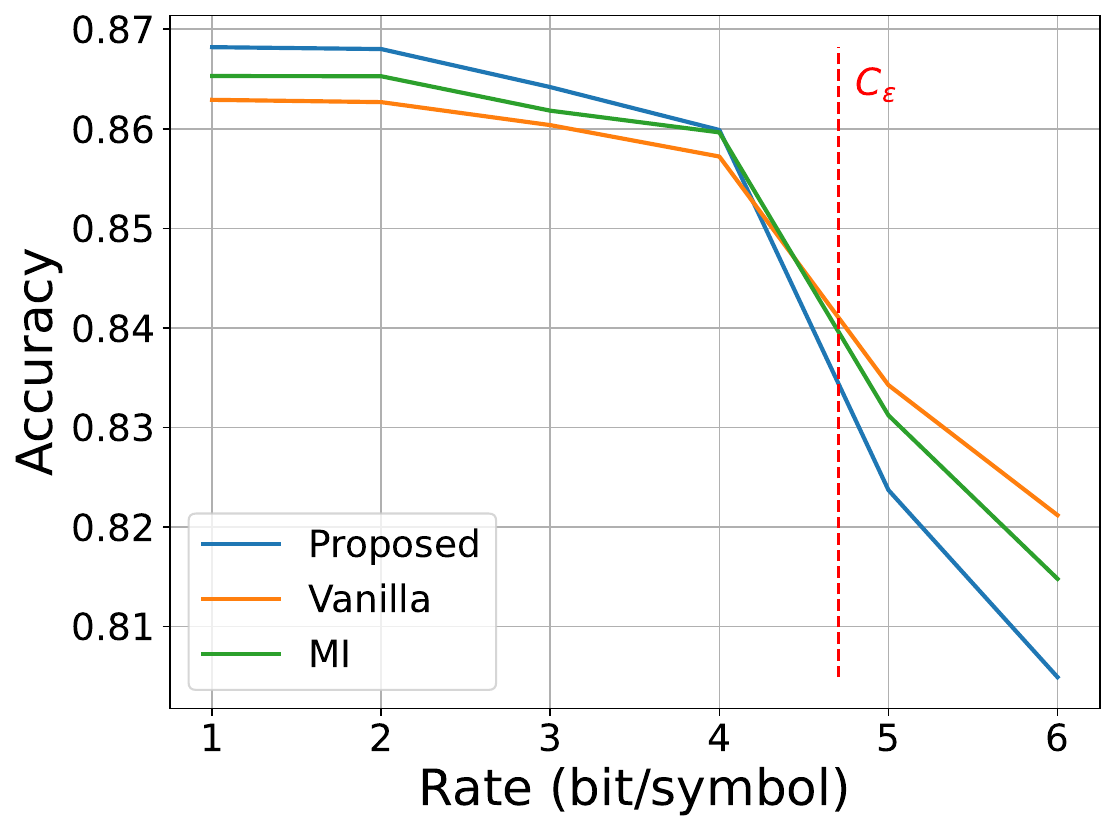}
      \label{fig:cifar_vgg_acc_rate}}}
  \caption{VGG11 on CIFAR-10 Dataset. \textbf{(a)}: The approximated mutual information $\Tilde{I}(W,Z;S)$ in the upper bound during the training. \textbf{(b)}: The accuracy on the test dataset w.r.t the transmission rates.}
  \label{fig:cifar_vgg}
\end{figure}

\begin{figure}[htbp]
  \centerline{\subfigure[]{\includegraphics[width=0.5\columnwidth]{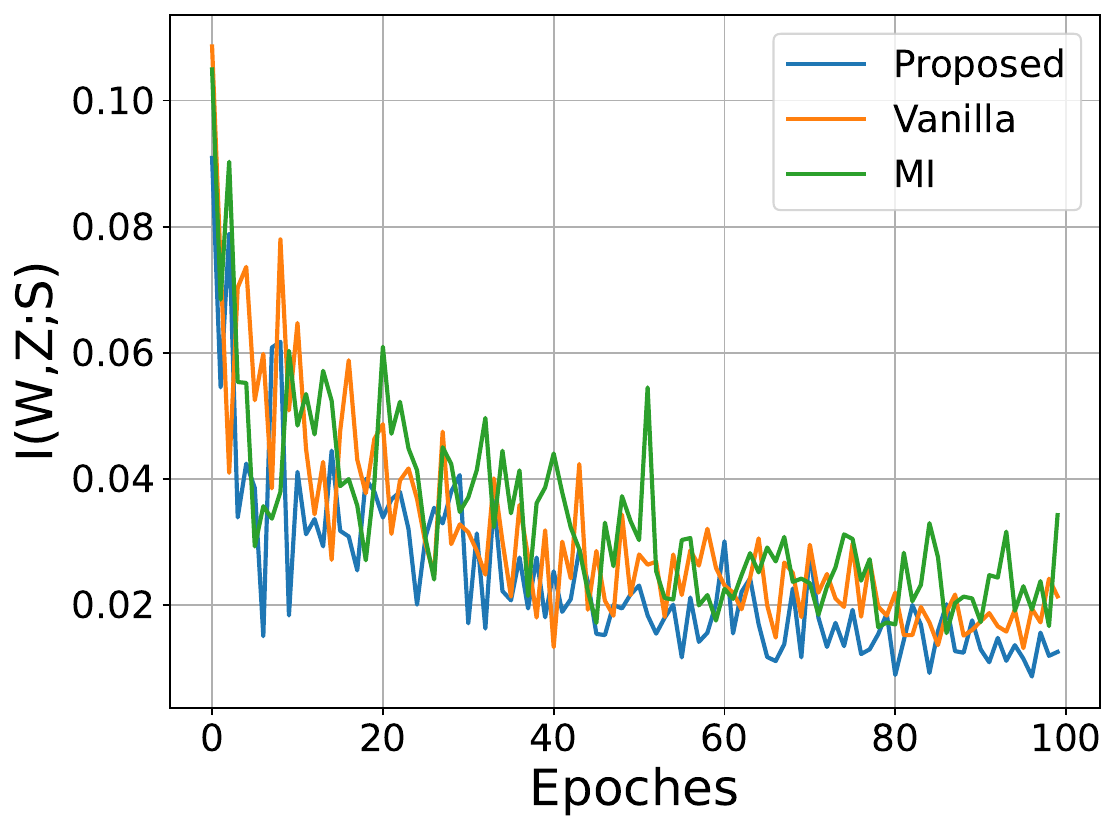}
      \label{fig:cifar_resnet_info}}
    \subfigure[]{\includegraphics[width=0.5\columnwidth]{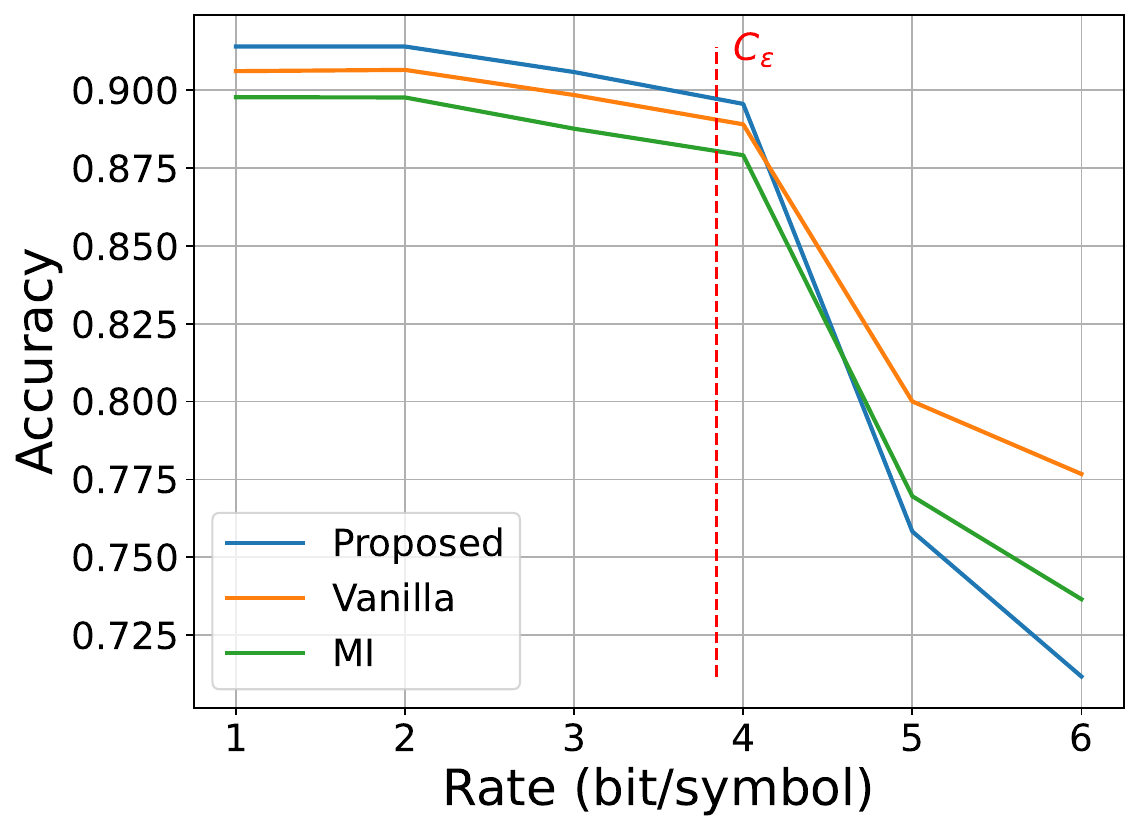}
      \label{fig:cifar_resnet_acc_rate}}}
  \caption{ResNet18 on CIFAR-10 Dataset. \textbf{(a)}: The approximated mutual information $\Tilde{I}(W,Z;S)$ in the upper bound during the training. \textbf{(b)}: The accuracy on the test dataset w.r.t the transmission rates.}
  \label{fig:cifar_resnet}
\end{figure}

\begin{figure}[htbp]
  \centerline{\subfigure[]{\includegraphics[width=0.5\columnwidth]{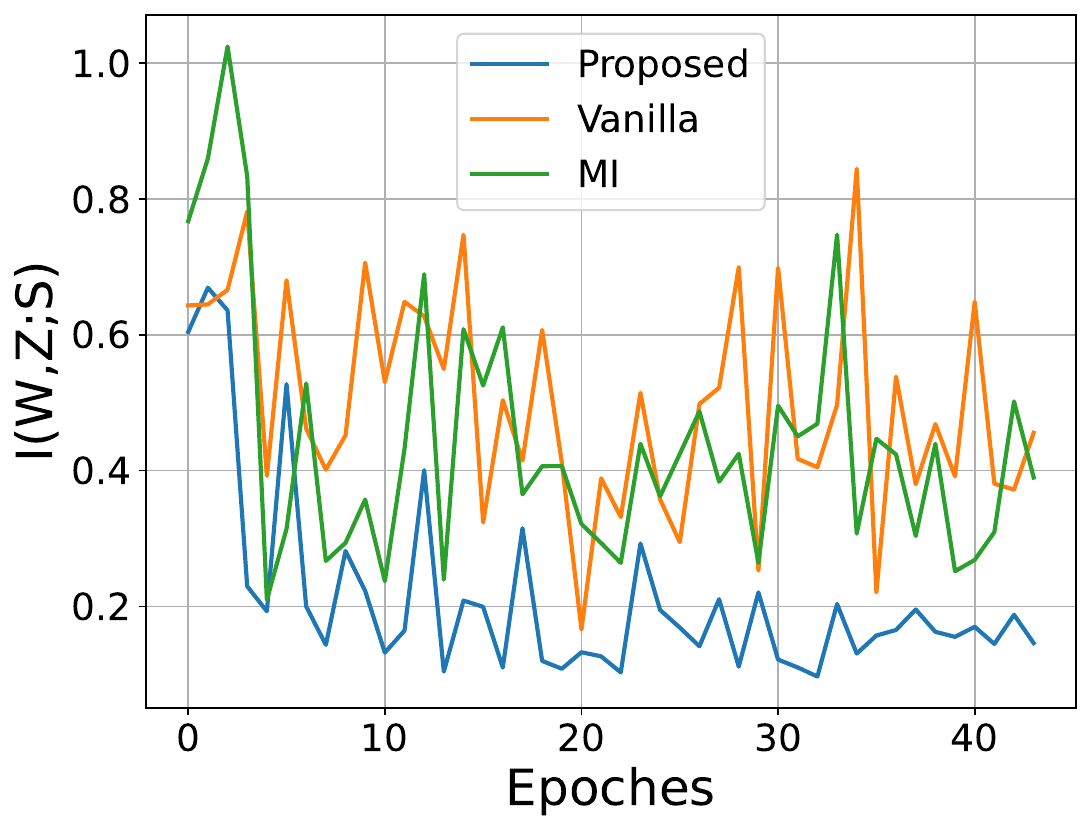}
      \label{fig:cifar100_vgg_info}}
    \subfigure[]{\includegraphics[width=0.5\columnwidth]{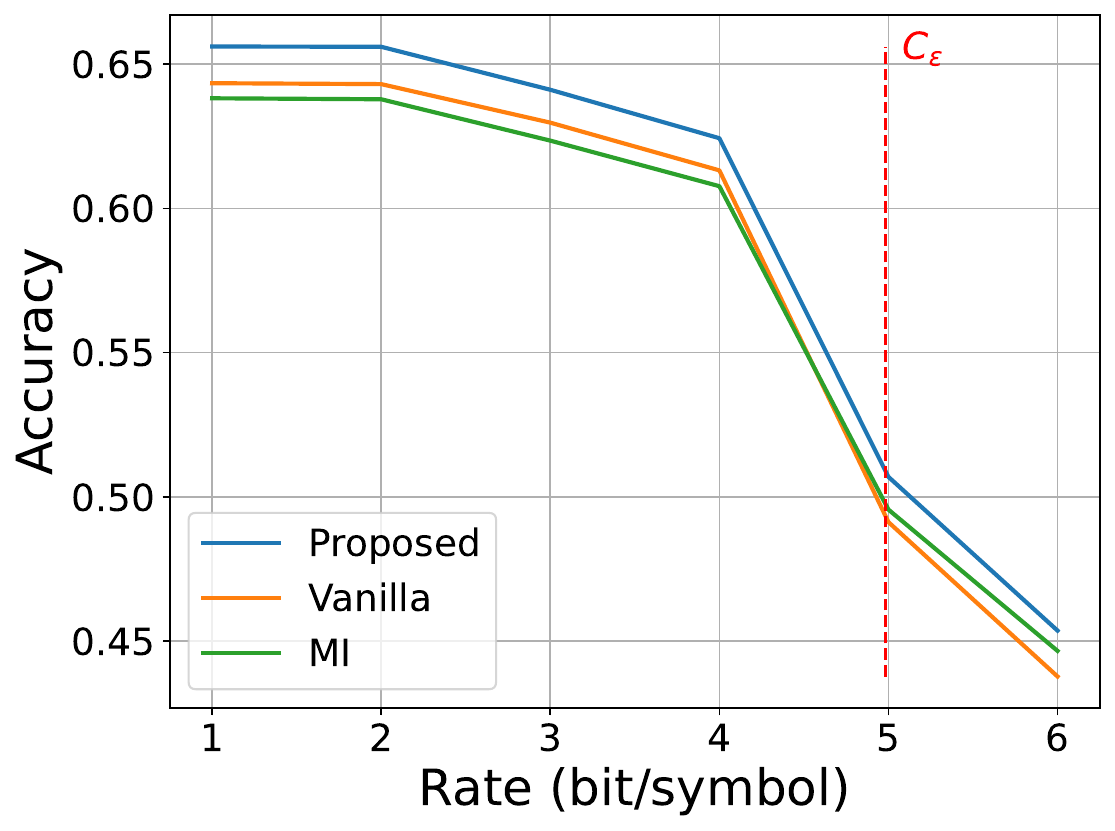}
      \label{fig:cifar100_vgg_acc_rate}}}
  \caption{VGG16 on CIFAR-100 Dataset. \textbf{(a)}: The approximated mutual information $\Tilde{I}(W,Z;S)$ in the upper bound during the training. \textbf{(b)}: The accuracy on the test dataset w.r.t the transmission rates.}
  \label{fig:cifar100_vgg}
\end{figure}

\begin{figure}[htbp]
  \centerline{\subfigure[]{\includegraphics[width=0.5\columnwidth]{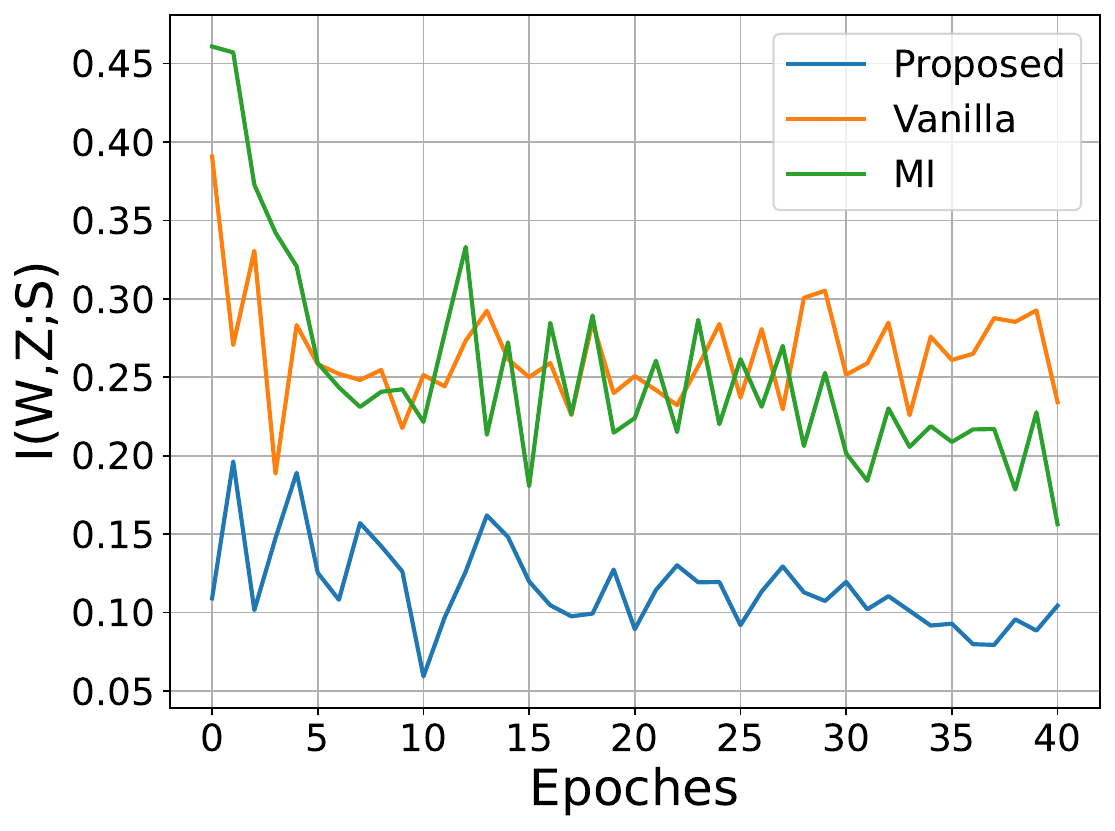}
      \label{fig:cifar100_resnet_info}}
    \subfigure[]{\includegraphics[width=0.5\columnwidth]{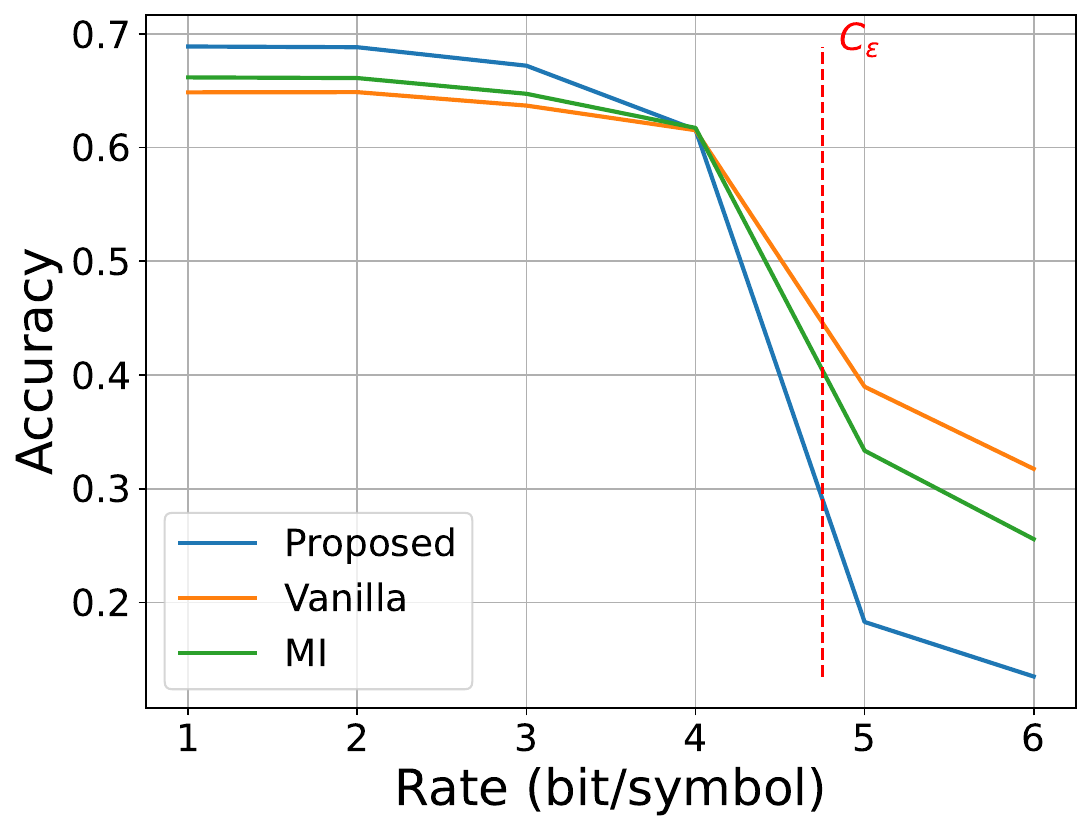}
      \label{fig:cifar100_resnet_acc_rate}}}
  \caption{ResNet34 on CIFAR-100 Dataset. \textbf{(a)}: The approximated mutual information $\Tilde{I}(W,Z;S)$ in the upper bound during the training. \textbf{(b)}: The accuracy on the test dataset w.r.t the transmission rates.}
  \label{fig:cifar100_resnet}
\end{figure}

\begin{figure}[htbp]
  \centerline{\subfigure[]{\includegraphics[width=0.5\columnwidth]{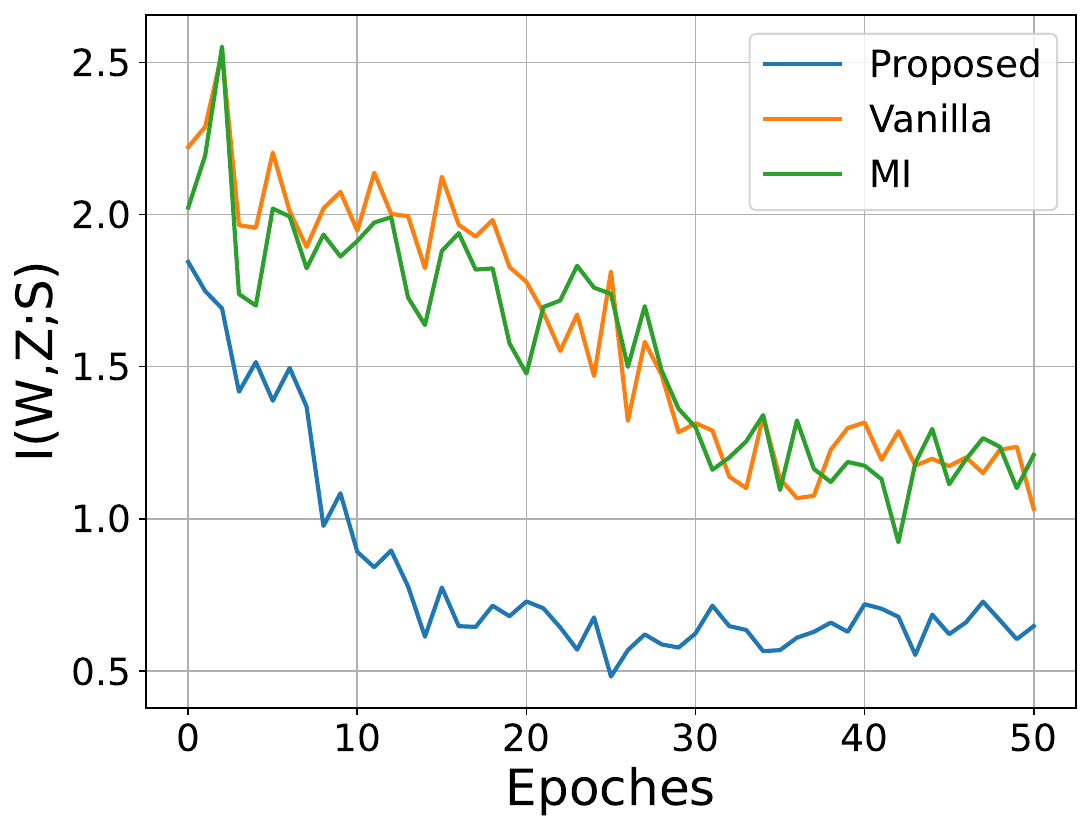}
      \label{fig:vit_info}}
    \subfigure[]{\includegraphics[width=0.5\columnwidth]{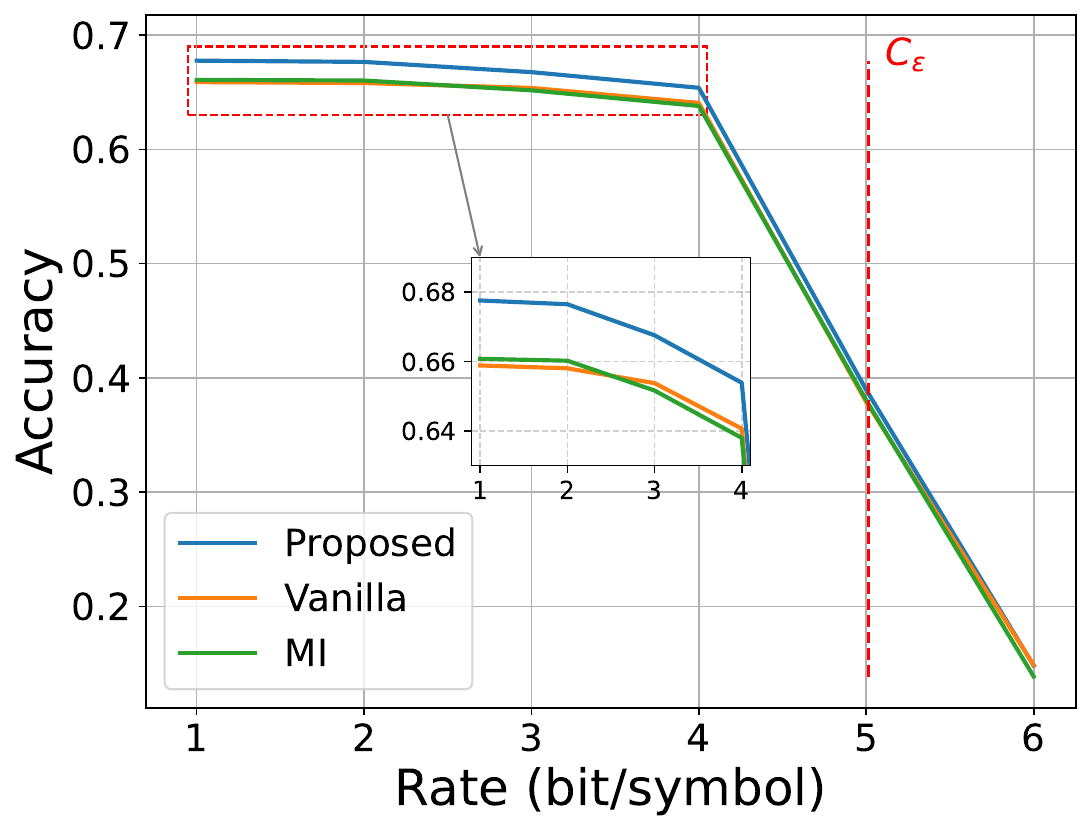}
      \label{fig:vit_acc_rate}}}
  \caption{ViT on Tiny-ImageNet Dataset. \textbf{(a)}: The approximated mutual information $\Tilde{I}(W,Z;S)$ in the upper bound during the training. \textbf{(b)}: The accuracy on the test dataset w.r.t the transmission rates.}
  \label{fig:vit}
\end{figure}

\subsection{Robustness Evaluation}
The training and inference results are summarized in Figures \ref{fig:catdog}-\ref{fig:vit}. 
It can be seen from the figures that the mutual information $I(W,Z;S)$ decreases over epochs and converges to a small value. This is because the NN's weights are continuously updated throughout the training process, causing the mutual information to change. It eventually stabilizes as the optimal weights of the NN are learned. This suggests that the training process enhances the inherent robustness of the NN against wireless distortion.
It can be seen from the figures that the proposed robust training framework achieves a smaller upper bound $G=\sigma\sqrt{2I(W,Z;S)}$ during training and thus improves the robustness of wireless distributed learning. 
Additionally, we can see from the result of the simple CNN in Figure \ref{fig:catdog_info} that the mutual information $I(W,Z;S)$ surges in the first few training epochs, and then decreases slowly approaching zero. Such a phenomenon corresponds to the 'fitting' and 'compression' hypotheses in the IB theory of deep learning \cite{IB-DNN}. Specifically, the model learns information about the task in the fitting phase and then increases the generalization performance in the compression phase. 
The training process of the robust training framework could also be divided into two phases according to the curve of mutual information $I(W,Z;S)$.
In the first phase, $I(W,Z;S)$ increases since the model extracts all relevant information from the wireless channel and the wireless inference task. In the second phase, the model starts to abandon the irrelevant information and regularize against the wireless discrepancy gap between the wireless risk and standard risk. However, for deeper and more complex NNs, the first fitting phase might be too short and might not be observed in the experiments.

Comparing the test accuracy during the inference process, the model trained in the robust training framework outperforms the baselines. We credit the improvement to the explicit consideration of information regularization during the training process, which enhances the model's generalization ability under various channel conditions.
In our experiment, the channel capacity at an SNR of 10 dB is approximately 3.46 bit/symbol. 
Constrained by Shannon's channel capacity, the conventional communication system can achieve error-free transmission at rates up to 3 bit/symbol. 
Meanwhile, we statistically compute the task outage probability $p_e$. By taking $\epsilon=p_e$, we calculate the task-aware $\epsilon$-capacity in \cref{equ:max_rate}, which is noted in the figures.
The simulation results indicate that the proposed framework achieves higher task accuracy up to the rate of 4 bit/symbol, enabling the learning model to transmit at a rate beyond the Shannon capacity while maintaining inference accuracy. It demonstrates that we could leverage the intrinsic robustness within wireless distributed learning for higher communication efficiency while maintaining the inference performance.
Therefore, the proposed robust training framework can effectively enhance the robustness of wireless distributed learning, thereby exploiting this robustness to transcend the Shannon capacity.
However, it is worth noting that the task accuracy for all three methods declines rapidly at high transmission rates above 4 bit/symbol. In some cases, the proposed framework may even underperform compared to the baselines. This is because the transmission rate exceeds the proposed task-aware $\epsilon$-capacity, leading to a high task outage probability. Such a scenario is beyond the scope of this study, which primarily focuses on theoretically exploring the task-aware transmission upper limit that slightly exceeds the Shannon capacity by leveraging the robustness in wireless distributed learning.

\subsection{The Impact of Prior Distribution}In our theoretical analysis, the joint distribution $p(w,z)$ characterizes the relation between model weights $w$ and sample $z$ under standard distributed learning. We note that in standard distributed learning, the learning algorithm $p(w|z)$ aims to find the optimal weights minimizing the loss function. However, it is almost impossible to achieve $\E_{p(w,z)}\left[l(W,Z)\right]$ in practical simulations as existing learning models inevitably exhibit residual error. 

In our simulations, we first train the model under standard distributed learning to learn $\theta_z$ that parameterizes the distribution $p(w|z)$. This is considered as the prior weights and used to estimate the mutual information $I(W,Z;S)$ as shown in Algorithm \ref{alg:approximation}. Therefore, it is crucial to optimize $p(w,z)$, as a better prior leads to a more accurate mutual information estimation. In return, the proposed robust training framework can more effectively enhance the robustness of wireless distributed learning. We conduct the following simulations to demonstrate the impact of $p(w,z)$ and the results are summarized in Figure \ref{fig:priorepoch}. Specifically, we pre-train the model using a varying number of epochs under standard distributed learning to obtain $\theta_z$. Under wireless distributed learning, we utilize the proposed robust training framework by regularizing with mutual information calculated using $\theta_z$, and the trained model is deployed for inference in a wireless channel at an SNR of 10 dB using multiple transmission rates. It can be seen from the figures that longer pre-training improves the accuracy within the task-achievable region. 
This aligns with our theoretical insight that an optimal $p(w,z)$ tightens the wireless discrepancy upper bound, ensuring the robustness of wireless distributed learning.

\begin{figure}[htbp]
  \centerline{\subfigure[]{\includegraphics[width=0.5\columnwidth]{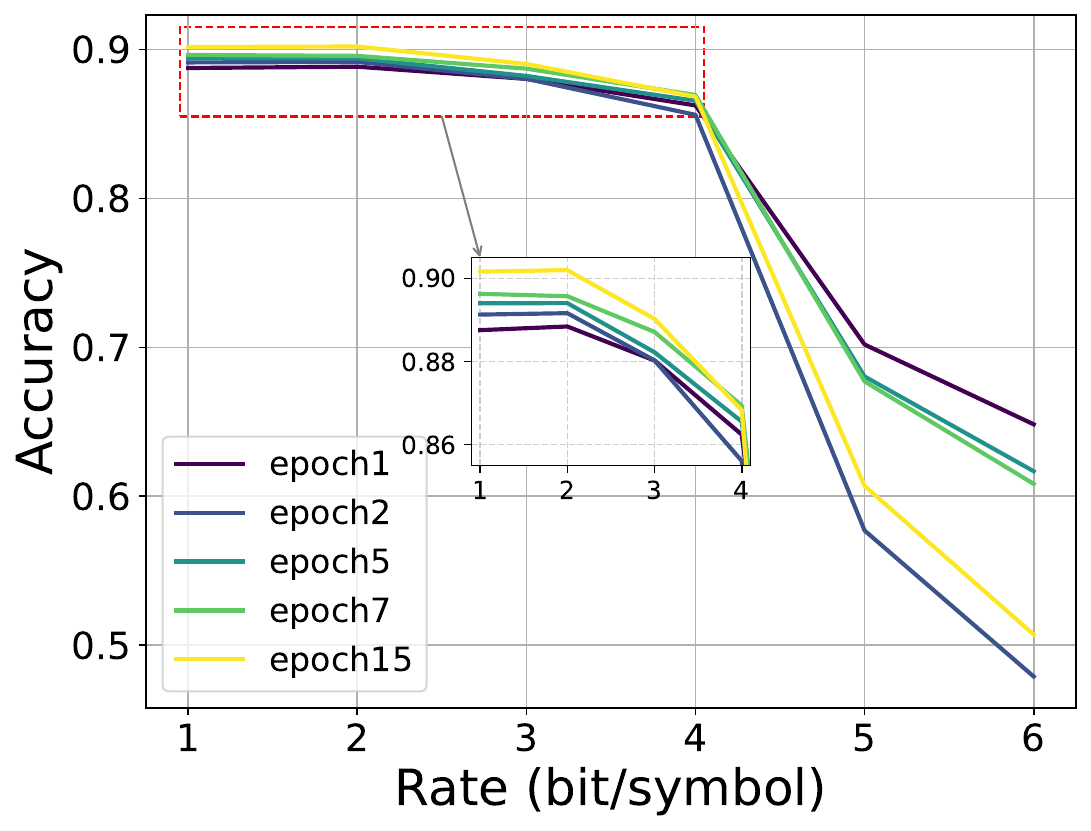}
      \label{fig:cifar_resnet_acc_vs_rate_priorepoch}}
    \subfigure[]{\includegraphics[width=0.5\columnwidth]{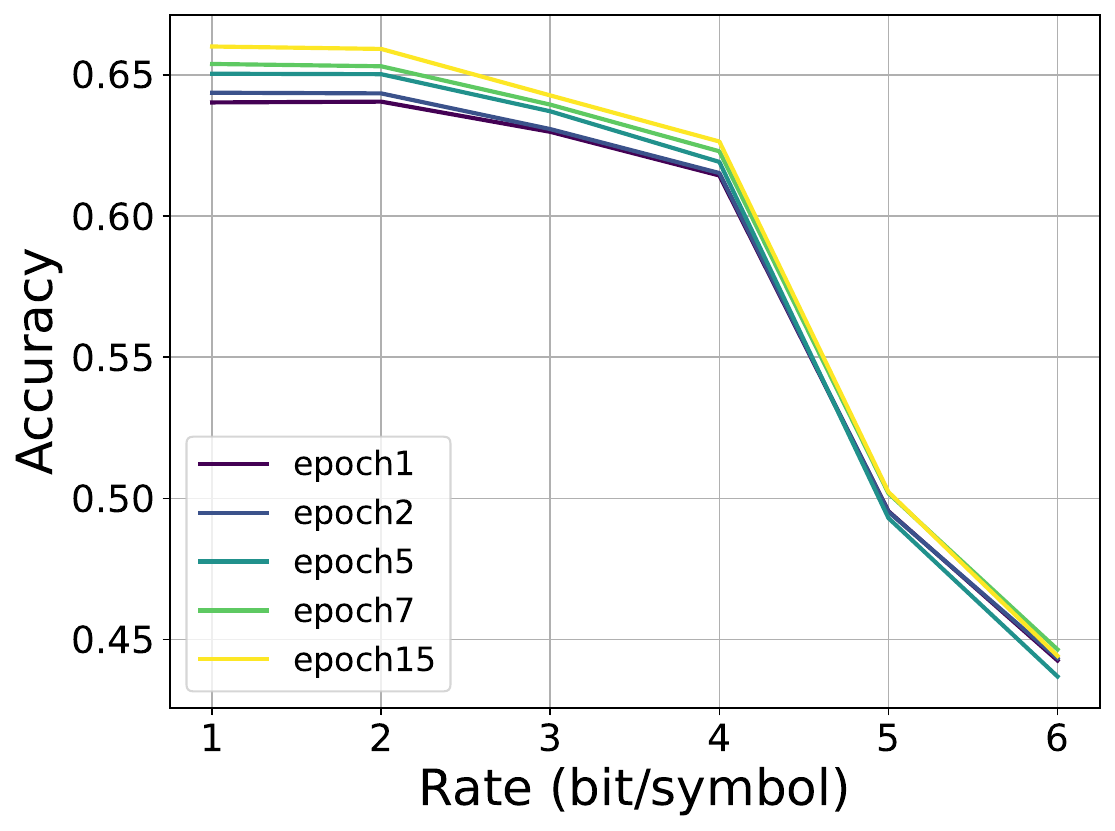}
      \label{fig:cifar100_vgg_acc_vs_rate_priorepoch}}}
  \caption{The accuracy on the test dataset w.r.t. the transmission rates. \textbf{(a)}: ResNet18 on CIFAR-10 Dataset. \textbf{(b)}: VGG16 on CIFAR-100 Dataset.}
  \label{fig:priorepoch}
\end{figure}

\section{Conclusion}
\label{sec:conclusion}
In this paper, we have presented an information-theoretic analysis of the robustness within wireless distributed learning and its implications for achieving task-aware $\epsilon$-capacity. By leveraging information theory, we have derived an upper bound in terms of mutual information on the performance deterioration due to wireless fading. The upper bound serves as a measure of the robustness against channel distortions.
In contrast to the bit-level metric, we have defined task outage based on the upper bound as a metric for successful inference.
Thereby, we have characterized the maximum achievable rate in wireless distributed learning and the task-aware $\epsilon$-capacity benefiting from robustness.
We have further proposed an efficient algorithm to estimate the upper bound in practice.
Given the theoretical analysis of the robustness in wireless distributed learning, we have designed a robust training framework based on the derived upper bound, which is capable of improving the model's robustness.
The robustness in wireless distributed learning and the effectiveness of our framework were validated through numerical experiments.

While our work provides novel insights into evaluating the robustness in wireless distributed learning, several limitations present opportunities for future exploration. Our current analysis assumes model hypotheses $W$ depend on channel side information $S$, which may not apply to pre-trained models where this side information is unavailable during training. Additionally, although we proved achievability for task-aware $\epsilon$-capacity in Theorem \ref{thm:max_rate}, developing a tight converse remains challenging. This would require new coding strategies or NN architectures. Future work could also focus on deriving tighter bounds by applying more refined inequalities or using conditions in specific scenarios. Finally, our definition of task outage based on loss exceeding the upper bound could be supplemented with metrics that more directly relate to task accuracy, as increased loss does not always indicate incorrect inference. Addressing these limitations could further improve robustness in wireless distributed learning systems and enhance the task-aware channel capacity.

\renewcommand{\arraystretch}{1.2}
\begin{table*}[htbp]
    \small
    \caption{NN architecture in wireless distributed learning}
    \label{tab:architecture}
    \centering
    \begin{center}
        \begin{threeparttable}
            \begin{tabular}{|m{0.2\textwidth}<{\centering}|m{0.35\textwidth}<{\centering}|m{0.35\textwidth}<{\centering}|}
                \hline
                Model & encoder & decoder \\
                \hline
                6-layer CNN & Conv16\textsuperscript{\textcolor{gray}{*1}} + Conv32 + Conv64 & FC500\textsuperscript{\textcolor{gray}{*2}} + FC50 + FC2 \\ 
                \hline
                VGG11      & Conv64 + Conv128 + Conv256$\times$2 + Conv512$\times$2 & Conv512$\times$2 + FC4096$\times 2$ + FC10 \\ 
                \hline
                VGG16      & Conv64$\times$2 + Conv128$\times$2 + Conv256$\times$3 & Conv512$\times$3 + Conv512$\times$3 + FC4096$\times 2$ + FC100 \\ 
                \hline
                ResNet18      & Conv64 + Block64$\times$2\textsuperscript{\textcolor{gray}{*3}} + Block128$\times$2 & Block256$\times$2 + Block512$\times$2 + FC10 \\ 
                \hline
                ResNet34      & Conv64 + Block64$\times$3 + Block128$\times$4 & Block256$\times$6 + Block512$\times$3 + FC10 \\ 
                \hline
                ViT-small-patch16-224      & PatchEmbed + Transformer384$\times$6\textsuperscript{\textcolor{gray}{*4}} & Transformer384$\times$6 + FC200 \\ 
                \hline
            \end{tabular}
            \begin{tablenotes}
                    \item \textcolor{gray}{*1} \textcolor{gray}{Conv represents a convolutional layer with kernel size 3$\times$3, and 16 represents the number of channels.} 
                    \item \textcolor{gray}{*2} \textcolor{gray}{FC represents a fully connected layer, and 500 represents the number of nodes in the fully connected layer.}
                    \item \textcolor{gray}{*3} \textcolor{gray}{Block represents a residual block with two convolutional layers and a skip connection, and 64 represents the number of channels of the convolutional layers.}
                    \item \textcolor{gray}{*4} \textcolor{gray}{Transformer represents a residual Transformer encoder with the number of heads 12, and 384 represents the dimension of the embedding.}
            \end{tablenotes}
        \end{threeparttable}
    \end{center}
\end{table*}

\appendices

\section{Proof of Corollary \ref{thm:upper_bound_gamma}}
\label{apx:proof_gamma}
\begin{proof}
\label{thm:proof_gamma}
    Leveraging the Donsker-Varadhan variational representation in \cref{equ:KL_divergence}, we have
    \begin{equation}
        \label{equ:KL_divergence_gap_gamma}
        \begin{split}
            & D_{\text{KL}}(q(w, z | s)\| p(w,z)) \\ 
            & \geq \lambda\left(\E_{q}\left[ l(W,Z)\right] - \E_{p}\left[ l(W,Z)\right]\right) - \frac{\lambda^2\sigma^2}{2(1-c\lambda)}. 
        \end{split}
    \end{equation}
    For simple notation, we denote $D_{\text{KL}}(q(w, z | s)\| p(w,z))$ as $D(q\|p)$. Following the same proof in Theorem \ref{thm:upper_bound} that the discriminant must be non-positive, we have
    \begin{equation}
        \left[cD(q\|p) - \left(\E_{q}\left[ l(W,Z)\right] - \E_{p}\left[ l(W,Z)\right]\right)\right]^2 \leq 2\sigma^2 D(q\|p).
    \end{equation}
    Therefore, the difference between expectations can be upper-bounded as follows:
    \begin{equation}
        \E_{q}\left[ l(W,Z)\right] - \E_{p}\left[ l(W,Z)\right] \leq \sigma\sqrt{2D(q\|p)} + cD(q\|p).
    \end{equation}
    Taking the expectation under the side information distribution $\mu$ and utilizing Jensen's inequality gives
    \begin{equation}
        \label{equ:jensen_inequality_gamma}
        \begin{split}
            & \E_{\mu}\left[\E_q[l(W,Z)]-\E_p[l(W,Z)]\right] \\
            & \leq \E_{\mu}\left[\sigma\sqrt{2D(q\|p)} + cD(q\|p)\right] \\
            & \leq \sqrt{2 \sigma^2 \E_{\mu}\left[D(q\|p)\right]} + cI(W,Z;S)\\
            & = \sigma\sqrt{2I(W,Z;S)} + cI(W,Z;S).
        \end{split}
    \end{equation}
    This completes the proof. 
    
\end{proof}

\section{Experiment settings}
\label{sec:architecture}
In wireless distributed learning, the NN model $f^{(W)}$ is separated into an encoder $f_e^{(W)}$ at the UE and a decoder $f_d^{(W)}$ at the BS. The detailed deployment of the models in the experiment, including VGG11, VGG16, ResNet18, ResNet34, and ViT, is summarized in Table \ref{tab:architecture}.


\end{document}